\documentclass[11pt]{article}
\usepackage{hyperref}

\usepackage{tikz-cd}
\usepackage{xcolor}
\usepackage{amsmath,amsthm,amsfonts,amssymb,amsopn,amscd,bm,cancel} 
\usepackage{color}
\usepackage{slashed}

\usepackage{soul}

\def\qed{{\unskip\nobreak\hfil\penalty50
\hskip2em\hbox{}\nobreak\hfil$\square$
\parfillskip=0pt \finalhyphendemerits=0\par}\medskip}
\def\proof{\trivlist \item[\hskip \labelsep{\bf Proof.\ }]}
\def\endproof{\null\hfill\qed\endtrivlist\noindent}
\def\eproof{\null\hfill\qed\endtrivlist\noindent}

\def\tilde{\widetilde}

\def\a{\alpha}

\def\Ga{\Gamma}

\def\l{\lambda}

\def\Om{\Omega}

\def\A{{\cal A}}

\def\E{{\cal E}}

\def\cF{{\cal F}}

\def\H{{\cal H}}

\def\S{{\cal S}}

\def\l{{\lambda}}

\def\PSL{{{\rm PSL}(2,\mathbb R)}}

\def\S2{S^{1(2)}}
\def\Poi{{\cal P}_+^\uparrow}

\def\RR{\mathbb R}

\newtheorem{theorem}{Theorem}[section]
\newtheorem{lemma}[theorem]{Lemma}

\newtheorem{corollary}[theorem]{Corollary}

\newtheorem{proposition}[theorem]{Proposition}

\theoremstyle{remark} 

\newcommand{\ben}{\begin{equation}}
\newcommand{\een}{\end{equation}}

\newcommand{\bthm}{\begin{theorem}}
\newcommand{\ethm}{\end{theorem}}

\newcommand{\bprop}{\begin{proposition}}
\newcommand{\eprop}{\end{proposition}}
\newcommand{\bcor}{\begin{corollary}}
\newcommand{\ecor}{\end{corollary}}
\newcommand{\blem}{\begin{lemma}}
\newcommand{\elem}{\end{lemma}}

\def\PSL{PSU(1,1)}

\def\CC{{\mathbb C}}

\def\SL2{{{\rm SL}(2,\bR)}}

\def\PSL2{{{\rm PSL}(2,\bR)}}

\def\U1{{{\rm V}(1)}}
\def\SU2{{{\rm SV}(2)}}

\def\SU{{{\rm SU}}}

\def\A{{\mathcal A}}

\def\H{{\mathcal H}}

\def\cI{{\mathcal I}}
\def\cK{{\mathcal K}}

\def\S{{\mathcal S}}
\def\T{{\mathcal T}}

\def\bx{{\bf x}}
\def\by{{\bf y}}
\def\bp{{\bf p}}
\def\bq{{\bf q}}

\def\cE{{\cal E}}
\def\cF{{\cal F}}

\def\cH{{\cal H}}
\def\cI{{\cal I}}

\def\cK{{\cal K}}

\def\cM{{\cal M}}

\def\bC{{\mathbb C}}

\def\bN{{\mathbb N}}

\def\bR{{\mathbb R}}
\def\RR{{\mathbb R}}
\def\bS{{\mathbb S}}

\def\bZ{{\mathbb Z}}

\def\a{\alpha}

\def\l{\lambda}       \def\L{\Lambda}

\def\c{\chi}

\def\fL{{\mathfrak L}}

\def\S{{S(\RR^d)}}
\def\ov{\overline}
\def\supp{{\text{supp}\,}}

\title{\Huge{Bekenstein's bound for wave packets}
} 
\author{ {\sc Stefan Hollands}\\
Institut für Theoretische Physik, Universität Leipzig \\
Brüderstrasse 16, 04103 Leipzig, Germany\\
Max Planck Institute for Mathematics in Sciences (MiS)\\
Inselstra{\ss}e 22, 04103 Leipzig, Germany
\\
\\
{\sc Roberto  Longo and Gerardo Morsella}\\
Dipartimento di Matematica,
Tor Vergata Universit\`a di Roma\\
Via della Ricerca Scientifica, 1, I-00133 Roma, Italy
}
\date{}
\begin{document}
\maketitle
\begin{abstract}
Let $B$ be a spatial region of width $2R$ and $\Phi$ a Klein-Gordon wave packet localized in $B$ at time zero. 
We show the inequality $S \leq 2\pi R E$; here, $S$ is the entropy of $\Phi$ contained in a region $B$, and $E$ is the energy content of $\Phi$ within $B$. 
We consider a wider setting and formulate a variational problem aimed at minimizing our bound when $\Phi$ is not localized in $B$. 
Our inequality holds in more generality in the framework of local, Poincaré covariant nets of standard subspaces and is related to the Bekenstein inequality.
We point out a general bound that is compatible with the recent numerical computations by Bostelmann, Cadamuro, and Minz concerning the one-particle modular Hamiltonian of a scalar massive quantum Klein-Gordon field.
We also provide a version of the entropy balance and ant formulas for wave packets.
\end{abstract}

\newpage

\section{Introduction}
This note concerns the linear space of Klein-Gordon wave packets and, more generally, local nets of standard subspaces of a Hilbert space. It has been written with two motivations. 

Our first aim is to provide general bounds for the massive modular Hamiltonian associated with a spatial ball in the free, scalar Quantum Field Theory. The explicit description for the modular Hamiltonian of a ball is known only in the massless case \cite{LM23}; in this case, the modular group is associated with a geometrical action by spacetime conformal transformations \cite{HL82}. The massive case does not correspond to any geometric action; one might hope to find out a perturbation formula from the massless case, but any attempt to this end has so far failed, see \cite{LM23}. 
Based on these premises, Bostelmann, Cadamuro, and Minz have recently undertaken an interesting, detailed numerical analysis \cite{BCM23}. In the two-by-two matrix description of Cauchy data, the generator of the modular group of the unit ball is off-diagonal with two terms: $L$ and $M$ (up to a constant), see formula \eqref{LM} below; $M$ and $L$ contain the same information as $M^* = - L$ in the underlying inner product; in particular, the $M$ term has 
been plotted in \cite{BCM23} as the mass $m$ varies. 
In the massless case, $M$ is (the multiplication operator by) a parabolic function, $M = \frac12(1 -x^2)$, thus $M(\pm1) =0$ and $M(0) =1/2$; as the mass increases, the plot of $M$ resembles a higher parabola shape; still $M(\pm 1) = 0$ but $M(0)$ increases to 1, cf. \cite[Figures 7 \& 8]{BCM23}. 
We shall indeed see in Corollary \ref{LMb} that the mass independent bound $M \leq 1$ holds as a consequence of the general bound derived in \cite{L25}. 

Our second motivation comes from the local entropy of a wave packet that was defined in \cite{L19,CLR20}, see also \cite{CLRR22,L23}. 
The local entropy measures the information content of a wave packet in a given region. Now, the Bekenstein bound in Quantum Field Theory \cite{Ca08} suggests that a local entropy/energy bound is to be satisfied. 
The recent Bekenstein-type inequality in Quantum Field Theory \cite{L25,HL} will show that a local entropy/energy bound for a wave packet or, more generally, for a local net of standard subspaces of a Hilbert space is satisfied, where the local energy is defined more precisely below in terms of the stress energy tensor of the wave. This bound is presented in this paper and was actually the starting point for the derivation of QFT Bekenstein-type bound \cite{L25}. 

Let $B$ be a spatial region of width $2R$ and $\Phi$ a Klein-Gordon wave packet, that is, $\Phi$ is a smooth solution of the Klein-Gordon equation $(\square + m^2)\Phi =0$. If the Cauchy data $(f,g)$ of $\Phi$ are supported in $B$, we will derive the bound
\[
S(\Phi | B) \leq 2 \pi R \int_{B} {T_{00}}_\Phi(x)|_{x_0 = 0} d\bx\, ,
\]
with $S(\Phi | B)$ the entropy of $\Phi$ in $B$ and $T_{00}$ the stress-energy tensor. If $f,g$ are no longer supported in $B$, the above inequality does not hold, due to boundary contributions to the entropy and the energy. In order to derive a bound also in this situation, we formulate a variational problem that has its own analytical interest; its explicit solution goes beyond the frame of this paper. Here, we just note in Section \ref{VP} some properties and solutions in particlar cases. We also briefly discuss the extension of such bounds to the fermionic case and obtain a version of the above inequality for one particle states of the Majorana field localized in $B$.

In Section \ref{eba} we provide a version the entropy balance formula \cite{W17,CF18} and ant formula \cite{CF18,HL25} for wave packets: the first one follows at once directly in our context; the second one cannot be inferred from the ant formula for free QFT nets of von Neumann algebras. Indeed, in that context, the version proven here would involve an infimum taken only over coherent states, hence it is a stronger statement.

\section{Preliminaries}
\subsection{The entropy operator}
We recall the setup of standard subspaces. Let $\H$ be a complex Hilbert space and $H\subset\H$ a {\it standard subspace}, namely $H$ is a closed, real linear subspace of $\H$ such that $H\cap iH = \{0\}$ and $\ov{H +iH} = \H$. The {\it modular operator} $\Delta_H$ is a positive, non-singular operator on $\H$, and the {\it modular conjugation} $J_H$ is a anti-unitary involution on $\H$ defined by the polar decomposition $S_H = J_H \Delta_H^{1/2}$ of the anti-linear, densely defined involution $S_H : H + iH\to H + iH$, $S(\Phi + i\Psi) = \Phi - i\Psi$; they satisfy the fundamental relations
\[
\Delta_H^{is}H = H\,, \ s\in\RR\, ,\quad J_H H = H'\, ,
\]
where $H' := \{\Phi\in\H: \Im(\Phi,\Psi) = 0,  \Psi\in H\}$ is the {\it symplectic complement} of $H$. We refer to \cite{L08} for details. 

The {\it entropy operator} $\E_H$ associated with $H$ is a real linear, positive operator on $\H$. $\E_H$ associated with any closed, real linear subspace $H$ of $\H$. Here, we recall the definition of $\E_H$ with $H$ a standard subspace which is {\it factorial}, namely $H\cap H' = \{0\}$ (which is the case of interest in the following):
\[
\E_H =: i P_H i \log\Delta_H\, ,
\] 
more precisely, $\E_H$ is the closure of the operator in the right-hand side of the above formula. Here, $P_H$ is the {\it cutting projection}, namely the real linear operator on $\H$ with domain $H +H'$ defined by $P_H(\Phi + \Phi') = \Phi$, with $\Phi\in H$, $\Phi'\in H'$. Note that $P_H$ is densely defined as $H$ is standard and
well-defined as $H$ is factorial. 

The entropy operator is positive, selfadjoint (as real operator on the real Hilbert space $\H$ with scalar product $\Re(\cdot,\cdot)$) and $\E_H \leq \E_K$ if $H\subset K$. 
The {\it entropy of a vector} $\Phi$ with respect to $H$ is then defined by
\ben\label{Ev}
S(\Phi | H) := \Re (\Phi, \E_H \Phi)
\een
in the quadratic form sense, thus $S(\Phi | H) = \|\E^{1/2}_H \Phi\|^2$ if $\Phi \in D(\E^{1/2}_H)$, and $S(\Phi | H) = +\infty$ otherwise. 
$S(\Phi | H)$ satisfies natural properties like {\it positivity}: $S(\Phi | H)\geq 0$, and {\it monotonicity}: $S(\Phi | H)\leq S(\Phi | K)$ if $H\subset K$. 
See \cite{L19, CLR20, L23} as references for this setup. 

We note that, if $H$ is any standard subspace of a Hilbert space $\H$, the entropy operators with respect to $H$ and $H'$ are related by
\ben\label{EE'}
\E_H - \bar\E_{H} = - \log\Delta_H
\een
(on the intersections of the domains), with $\bar \E_{H} = \E_{H'}$, as $\E_{H'}$ is the closure of $iP_{H'}i  \log\Delta_{H'} =  - i( 1 - P_{H}) i \log\Delta_{H}$.  

\subsection{Bekenstein inequality for nets of standard subspaces}\label{SB}
Let $\H$ be a complex Hilbert space. A Poincaré covariant {\it net of standard subspaces} (see \cite{L08}) of $\H$ is a map $H: O\mapsto H(O)$ that associates a standard subspace $H(O)$ of $\H$ to each double cone $O\subset \RR^{1+d}$, $d \geq 1$, such that

$\bullet$ $H(O_1) \subset H(O_2)$ if $O_1 \subset O_2$ (isotony),

$\bullet$ there exists a unitary, positive energy representation $U$ of the proper, $1 +d$-dimensional Poincaré group $\Poi$ on $\H$ such that $U(g)H(O) = H(gO)$, for all $g\in\Poi$, $O$ double cone (Poincaré covariance),

$\bullet$ $H(O_1) \subset H(O_2)'$ if $O_1 \subset O'_2$ (locality),

$\bullet$  $\Delta^{-is}_{H(W)} = U\big(\L_W(2\pi s)\big)$, $s\in\RR$, if $W\subset \RR^{1+d}$ is a wedge (Bisognano-Wichmann property);

\noindent
here $H(W)$ is the closed real linear span of the $H(O)$ with $O\subset W$ and $\L_W$ is the one-parameter group pure Lorentz transformations preserving $W$ (boosts). The Bisognano-Wichmann property holds automatically if $U$ is irreducible \cite{M18}. 

Recall that every unitary, positive energy representation $U$ of $\Poi$ as above gives rise to a net standard subspaces as above, provided $U$ has finite spin \cite{BGL02}. If $U$ has infinite spin, then $H(O) =\CC$ if $O$ is a double cone, but $H(W)$ is defined if $W$ is a wedge \cite{LMR16}. 

If $B\subset \RR^d$ is a spatial (time zero) region, we set $H(B) = H(\hat B)$, where $\hat B\subset \RR^{1 +d}$ is the causal envelope of $B$. 
We shall say that $B$ has half-{\it width} $R\geq 0$ if $R$ is the smallest number such that $B$ lies between two parallel hyperplanes with distance $2R$. In this case, by applying a rigid motion, $B$ can be mapped onto a subset of the strip  $\{ -R \leq x_1 \leq R\}$. Clearly, a ball of radius $R$ has a half-width equal to $R$. 

The next proposition provides the version of the Bekenstein-type inequality for local, Poincaré covariant, net of standard subspaces \cite{L25} (more generally, it holds for any local net of real Hilbert spaces with energy-momentum spectrum condition  
\bprop\label{beka}
If $\Phi\in H(B)$, then
\ben\label{Ineq}
S(\Phi | B)  \leq 2\pi R\, (\Phi, P\Phi) \, ,
\een
with $P$ the Hamiltonian (the generator of the time-translation one-parameter unitary group) and $R$ is the half-width of $B$. 
\eprop
\proof
The proposition is a corollary of the inequality \cite[Cor. 2.11]{L25} (following from \cite{BDL07}):
\ben\label{In}
-\log\Delta_{H(B)} \leq 2\pi R\, P \, ,
\een
and the fact that
\[
S(\Phi | B) = - (\Phi, \log\Delta_{H(B)}\Phi)\, .
\]
if $\Phi\in H(B)$. 
\eproof
Note that both sides of equations \eqref{Ineq}, \eqref{In}, and the operator order are defined in the quadratic form sense, see \cite{L25}.

\subsection{The wave Hilbert space}\label{SectWH}
We now recall the concrete construction of the net of standard subspaces associated with the unitary, scalar representation of $\Poi$ of mass $m\geq 0$,
see \cite{LM23} for example. 

Denote by $\S$ the real Schwarz space and fix $m\geq 0$.\footnote{In this section, we assume that $d \geq 2$ if $m=0$.}
As is known, if $f, g\in\S$ there is a unique smooth real function $\Phi(x_0,\bx)$ on $\mathbb R^{1+d}$ which is a solution of the Klein-Gordon equation 
\[
(\square + m^2)\Phi = 0\, ,
\]
$\square \Phi \equiv\partial_{x_0}^2 \Phi - \nabla^2_{\bx}\Phi $,
(a {\it wave packet} or, briefly, a wave) with Cauchy data $\Phi |_{x_0 =0} = f$, $\partial_{x_0}\Phi |_{x_0 =0} = g$. 
We set $\Phi=  \langle f,g\rangle$ and denote
by $\T$ the real linear space of these $\Phi$'s. The Poincaré group $\Poi$ naturally acts linearly on $\T$ and there is a natural, Poincaré invariant, real scalar product on $\T$; the completion of $\T$ then gives a real Hilbert space; moreover, this real Hilbert space is provided with the Poincaré invariant complex structure (i.e., an isometry $i$ such that $i^2 = -1$)
described momentarily, so we obtain a complex Hilbert space $\H_m$. Then $\H_m$ carries a unitary, positive energy representation of $\Poi$, the irreducible, scalar representation of mass $m$.  

We will often use the identification
\ben\label{st}
\S\oplus\S \longleftrightarrow \T\, , \qquad \ f\oplus g \longleftrightarrow  \langle f,g\rangle\, ,
\een  
so we may deal directly with $\S\oplus\S$ rather than $\T$.
Let $H_{m,\pm}$ be the real Hilbert space of tempered distributions $f \in \S'$ such that $\hat f$ is a Borel function with
\ben\label{H12}
 \|f \|^2_{\pm} = \int_{\RR^d} {(|\bp|^2 + m^2)^{\pm\frac12}} | \hat f(\bp)|^2 d\bp< +\infty\,  .
\een
$\S$ is dense in $H_{m,\pm}$ if $d>1$. 
As a real Hilbert space, 
$\H_m$
is equal to the real Hilbert space direct sum $H_m = H_{m,+}\oplus H_{m,-}$. 
The complex structure $i$  is given by the isometry determined by
 \ben\label{i}
 i = \begin{bmatrix}0 & \mu^{-1}\\ -\mu& 0\end{bmatrix}
\een
on $\S\oplus\S$,
where $\mu$ is the multiplication by $\sqrt{\bp^2 + m^2}$ in Fourier transform. 

The imaginary part of the scalar product ({\it symplectic form}) is given by
\ben\label{beta2}
\Im(f_1\oplus g_1, f_2\oplus g_2) =
 ( g_1, f_2) - ( f_1, g_2 ) \,  ,
\een
$f_1, f_2, g_1, g_2\in \S$ ($L^2$-scalar product). 

With $B\subset\RR^d$ a ball, we set
\[
H_{m,\pm}(B) = \big\{f_\pm \in \S : {\rm supp}(f_\pm) \subset B\big\}^-\, ,
\]
and the standard subspace $H_m(B)\subset\H_m$ is 
\ben\label{HB}
H_m(B) =: H_{m,+}(B)\oplus H_{m,-}(B)\, .
\een
For a more general region $B$, $H_m(C) = \{\sum H_m(B) :  B\,  {\rm ball}, B\subset C \}^-$. 

\section{Bekenstein's inequality and Klein-Gordon waves}\label{sect1}

With $\H= \H_m$,
the {\it local entropy} of a wave packet $\Phi = \langle f,g\rangle$ is now defined by \eqref{Ev}
\[
S(\Phi| B) = (\Phi, \E_H \Phi) = (\Phi, iP_Hi \log\Delta_H \Phi)\, ,
\]
for any region $B\subset \RR^d$, with $H=H(B)$ defined in \eqref{HB}. 

Recall that \cite{CLR20} 
\ben
\label{SPhi}
S(\Phi | x_1 > a) = 2\pi \int_{x_1 > a} (x_1 - a) {T_{00}}_\Phi(x)|_{x_0 = 0} d\bx
\een
and
\[
S(\Phi | x_1 < a) = 2\pi \int_{x_1 < a} (a - x_1) {T_{00}}_\Phi(x)|_{x_0 = 0} d\bx
\]
with $T_{00}$ the {\it stress-energy tensor} on $\RR^{1 + d}$
\[
 {T_{00}}_\Phi(x)|_{x_0 = 0}  = \frac12(|\nabla f|^2 + m^2 f^2 + g^2) \, .
\]
One can check that 
\[
(\Phi, P\Phi) = \int {T_{00}}_\Phi(0,\bx) d\bx\, ,
\]
therefore, the following concrete Bekenstein inequality follows as a corollary of Proposition \ref{beka}. We give a short, direct proof of it for future needs. 
\bprop\label{bek1}
If ${\rm supp}(f), {\rm supp}(g) \subset \bar B$, then
\ben\label{Ineqbis}
S(\Phi | B)  \leq 2\pi R\,E(\Phi| B) \, ,
\een
with $E(\Phi| B) = \int_B {T_{00}}_\Phi(0,\bx) d\bx$ and
$R$ the half-width of $B$. 
\eprop
\proof
Since the special Euclidean group acts covariantly and unitarily on $\H_m$, by the isometric covariance of the local entropy
we may assume that $B\subset \{ {\bf x}\in \RR^d:-R \leq x_1 \leq R\}$.   

By the monotonicity of the entropy (which follows from~\cite[Thm.\ 4.5]{CLR20} and the monotonicity of relative entropy between coherent states), we have
\begin{align}
S(\Phi | B)  \leq S(\Phi| \{x_1 > -R\}) &= \pi\int_{x_1 > -R}(x_1 + R) (|\nabla f|^2 + m^2 f^2 + g^2) d\bx   \nonumber \\
&= \pi\int_{B} (x_1 + R) (|\nabla f|^2 + m^2 f^2 + g^2)d\bx  \, ; \label{i1}
\end{align}
similarly,
\begin{align}
S(\Phi | B)  \leq S(\Phi |\{ x_1 < R\}) &= \pi\int_{x_1 < R}(- x_1 + R) (|\nabla f|^2 + m^2 f^2 + g^2) d\bx  \nonumber \\
&= \pi\int_{B}(-x_1 + R) (|\nabla f|^2 + m^2 f^2 + g^2)d\bx  \, .\label{i2}
\end{align}
By taking the mean of \eqref{i1} and \eqref{i2}, we then have 
\begin{align}\label{eqM1}
S(\Phi | B)  &\leq \frac\pi2\int_{B} (x_1 + R) (|\nabla f|^2 + m^2 f^2 + g^2)
+ \frac\pi2\int_{B}(-x_1 + R) (|\nabla f|^2 + m^2 f^2 + g^2)d\bx 
\\
&= \pi R\int_{B}  (|\nabla f|^2 + m^2 f^2 + g^2) d\bx 
\\
&= 2\pi R\int_{B}  {T_{00}}_\Phi(x)|_{x_0 = 0} d\bx = 2\pi R \, E(\Phi | B)\, .
\end{align}
\endproof
\bcor
Consider the differential operator $T = -(1 + r^2)\nabla^2 - r\partial_r$ on $L^2 (B)$, with $B\subset \RR^d$ the unit ball, defined on $C^\infty(B)$ with Dirichlet boundary conditions.  Then,
\[
\l_1 \geq d - 1 
\]
with $\l_1$ the lowest eigenvalue of $T$. 
\ecor
\proof
If the smooth function $f$ vanishes on the boundary $\partial B$, then, by integration by parts, \eqref{ineT} in the following is equivalent to
\[
(f, Tf) \geq (d-1) \|f\|^2
\]
$L^2$ scalar product), namely $T \geq d-1$. Thus \eqref{ineT}  holds for all $f\in C^\infty(\bar B)$ vanishing on $\partial B$ iff the spectrum of $T$ is contained in $[d-1, \infty)$, namely iff $\l_1 \geq d-1$. 
\eproof
We end this section by another corollary of \eqref{SPhi}, simply obtained by applying two derivatives $\partial_a^2$:
\begin{corollary}
For any wave packet, we have that
\ben
\label{qdec1}
\partial_a^2 S(\Phi | x_1 > a) = 2\pi \int_{\mathbb{R}^{d-2}}  {T_{00}}_\Phi(x)|_{x_0=0,x_1 = a} d\bx_{\|} \ge 0
\een
where $\bx_{\|} = (x_2, \dots, x_d)$. 
\end{corollary}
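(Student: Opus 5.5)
The corollary follows by differentiating \eqref{SPhi} twice with respect to $a$ under the integral sign. The only points needing care are the regularity that justifies differentiation and the fact that the second derivative produces a codimension-one integral over the hyperplane $\{x_1=a\}$.

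First, write $\rho(x_1) := \int_{\mathbb{R}^{d-1}} {T_{00}}_\Phi(0,x_1,\bx_{\|})\, d\bx_{\|}$. Since $f,g\in\S$, the energy density $\frac12(|\nabla f|^2+m^2f^2+g^2)$ is a nonnegative Schwartz function. By Fubini, $\rho$ is therefore a smooth, nonnegative, rapidly decreasing function of one variable. Then \eqref{SPhi} becomes
\[
S(\Phi | x_1>a) = 2\pi\int_a^{\infty} (x_1-a)\,\rho(x_1)\,dx_1 .
\]

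Second, differentiate once in $a$. The boundary term vanishes because the integrand is zero at $x_1=a$, which gives
\[
\partial_a S = -2\pi\int_a^\infty \rho(x_1)\,dx_1 .
\]
Differentiating again, the fundamental theorem of calculus gives
\[
\partial_a^2 S = 2\pi\,\rho(a),
\]
which is exactly the right-hand side of \eqref{qdec1}, the integral of $T_{00}$ over the hyperplane $x_1=a$. Differentiation under the integral is justified by dominated convergence, using the rapid decay of $\rho$, or simply by the fundamental theorem of calculus applied to the continuous integrable function $\rho$. Nonnegativity follows because $T_{00}\ge 0$ pointwise, being a sum of squares with $m^2\ge0$.

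There is essentially no obstacle. The only points to state carefully are the regularity of $\rho$, which follows from the Schwartz assumption on the Cauchy data, and the convention that $\bx_{\|}$ ranges over the full $(d-1)$-dimensional hyperplane. The paper writes $\mathbb{R}^{d-2}$, which I read as a typo for $\mathbb{R}^{d-1}$. If one wants the statement for general finite-energy waves rather than Schwartz data, the identity holds for a.e. $a$ in the distributional sense, with $\partial_a^2 S$ a positive measure. In that case convexity of $a\mapsto S(\Phi|x_1>a)$ is the robust content of the statement.
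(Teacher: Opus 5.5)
Your proof is correct and takes the same approach as the paper, which obtains the corollary simply by applying $\partial_a^2$ to \eqref{SPhi}. Your reduction to the one-variable density $\rho$ and the fundamental theorem of calculus just makes that step explicit, and you are also right that the hyperplane integral should be over $\mathbb{R}^{d-1}$, since $\bx_{\|}=(x_2,\dots,x_d)$ has $d-1$ components.
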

The inequality expressed by this corollary may be seen as a version of the quantum dominant energy condition (QDEC) in the present context. 
We will give a  proof of the QDEC for a general Haag-Kastler Quantum Field Theory in a forthcoming work \cite{HLM}. It would be interesting to gain a better understanding of the relation between the QDEC and the Bekenstein bound in such a more general setting.

\subsection{Bounds for the local, scalar, massive modular Hamiltonian}\label{Bmh}
We now consider the case $f=0$ or $g=0$, where $f,g\in C^\infty_0(B)$.  With $\Phi = \langle 0, g\rangle$, \eqref{eqM1} gives
\ben\label{eq2}
-(\Phi, \log\Delta_B\Phi) = S(\langle 0,g\rangle | B)  \leq\pi R\int_{B}   g^2 d\bx \, ;
\een
with $\Phi = \langle f, 0\rangle$, \eqref{eqM1} gives 
\ben\label{eq3}
-(\Phi, \log\Delta_B\Phi) = S(\langle f,0\rangle | B)  \leq\pi R\int_{B}  \big( |\nabla f|^2 + m^2 f^2 \big)d\bx \, ;
\een
Recall that
\ben\label{LM}
-i\log\Delta_B = \pi \begin{bmatrix}0 & M\\ L & 0\end{bmatrix}\, ,
\een
with $i$ given by \eqref{i}, $M : H_{m,-}(B)\to  H_{m,+}(B)$ and $L : H_{m,+}(B)\to  H_{m,-}(B)$ \cite{FG89}, see also \cite{L23}. Thus \eqref{eq2} and \eqref{eq3}  give the estimates with the $L^2$-norm 
\ben\label{MR}
\Im (g, Mg) \leq R\, \| g \|^2_2,
\een
\ben\label{MR2}
\Im (g, Lg) \leq R\, \big\| |\nabla f| \big\|^2_2  + m^2 \big\|  f \big\|^2_2\, .
\een

\bcor\label{LMb}
As real-linear operators $M, L: L^2(B) \to L^2(B)$, we have the bounds
\begin{gather*}
 0\leq M \leq R\, , \\ 0\leq -L\leq -R(\nabla^2 - m^2)
 \end{gather*}
 on $C^\infty_0(B)$. 
\ecor
\proof
Note that $\Im (g, Mg) \geq 0$ by the positivity of the entropy. Moreover,
\eqref{MR} and \eqref{beta2} give
$\int_B gMg d\bx \leq R \| g \|^2_2$ for $g \in C^\infty_0(B)$. Thus
\[
0\leq (g, Mg)_2 \leq R (g,g)_2
\] 
($L^2$ scalar product). On the other hand, \eqref{MR2} and \eqref{beta2} $0\leq - (g, Lg)_2 \leq R \| |\nabla f|\|_2^2 + m^2 \| f \|^2_2$, thus
\[
0\leq -(f, Lf)_2 \leq R\, (f, -\nabla^2 f + m^2 f)_2 
\]
by integration by parts,
so the corollary holds. 
\eproof

\section{Which bound if $\Phi$ is not localized?}\label{VP}
Let us specify the inequality \eqref{Ineqbis} in the case $m = 0$ and $B$ the unit ball. In this case, the entropy of $\Phi = \langle f,g\rangle$ is given by
\[
S(\Phi |B) = \pi\int_B (1-r^2) {T_{00}}_\Phi(x)\big|_{x_0 = 0}d\bx + \pi D\int_B f^2 d\bx \, ,
\]
with ${T_{00}}_\Phi = |\nabla f|^2 + g^2$ and $D = (d-1)/2$.  Therefore, the inequality \eqref{Ineqbis} is equivalent to
\[
\pi\int_B \frac{1-r^2}{2} (|\nabla f|^2 + g^2) d\bx + \pi D\int_B f^2 dx \leq \pi\int_{B} (|\nabla f|^2 + g^2) d\bx  \, ;
\]
in particular, setting $g=0$, we get the inequality
\ben\label{ineT}
 \int_{B} (1 +r^2) |\nabla f|^2 d\bx \geq (d-1)\int_B f^2 d\bx   \, .
\een
Therefore, the inequality \eqref{Ineqbis} does not generally hold if the support of $f$ is not contained in $\bar B$, as can be seen if $f= 1$ on $\bar B$. 

Moreover, we could consider the ``improved'' traceless stress-energy tensor associated to the wave $\Phi$
\[
T_{\mu \nu \Phi}^i := T_{\mu \nu \Phi} - \frac{d-1}{4d}(\partial_\mu \partial_\nu - \eta_{\mu \nu} \Box)(\Phi^2)
\]
($\eta_{\mu \nu}$ the Minkowski metric), which is also conserved and whose second quantization version also generates the translations. We can then compute, by double 
integration by parts, 
\[
\int_B (1-r^2) \nabla^2(\Phi^2) |_{x_0 =0}d\bx = \int_B (1-r^2) \nabla^2(f^2) d\bx =2 \int_{\partial B} f^2 \,d\sigma -2d \int_B f^2 d\bx,
\]
so that the entropy of $\Phi$ in $B$ can also be expressed in terms of the improved energy density by
\[
S(\Phi | B) = \pi \int_B (1-r^2) T_{00\Phi}^i(x)|_{x_0 = 0} d\bx + \pi \frac{D}d\int_{\partial B} f^2 \,d\sigma,
\]
suggesting that the correction to be added to~\eqref{Ineqbis} when $\Phi$ is not localized in $B$ should only depend on the restriction of $f$ to the boundary of $B$.

It is also interesting to observe that for $B$ a half space one has, similarly,
\[
S(\Phi| x_1 > a) = 2\pi \int_{x_1 > a}(x_1-a) T_{00 \Phi}^i(x)|_{x_0 = 0}d\bx + \pi \frac{D}d \int_{x_1 = a} f^2\,d\sigma,
\]
which also holds for $m > 0$ (where $T_{\mu \nu}^i$ is still a valid stress-energy tensor, but it is not traceless).

We now come back the case of arbitrary $m \geq 0$ and $B$ with $f,g$ not supported in $\bar B$, and study the extra term needed in \eqref{Ineqbis}. 

\blem\label{dis}
Let $B$ be a bounded region and $f\in\S$. If $u\in \S$ and $u|_{\partial B} =  f|_{\partial B}$, then the function $\tilde f$ equal to f on $\bar B$ and to $u$ on $B^c$ belongs to $H_{m,+}$. 
\elem
\proof
$\tilde f$ clearly belongs to $L^2(\RR^d, d\bx)$.  Its partial derivatives $\partial_{x_i}\tilde f$ exist and are smooth out the measure zero set $\partial B$, and are bounded and rapidly decreasing at infinity, so also $\partial_{x_i}\tilde f \in L^2(\RR^d, d\bx)$.
Therefore the Fourier transforms satisfy 
\[\int_{\RR^d} |\,\widehat{\!\tilde f}(\bp)|^2 d\bp < \infty \, , \quad 
\int_{\RR^d} |\bp|^2|\,\widehat{\!\tilde f}(\bp)|^2 d\bp < \infty 
\]
 by the Plancherel theorem. This readily implies that $\tilde f\in H_+$ because
 \begin{align*}
\int_{\RR^d} \, \sqrt{|\bp|^2 + m^2}\, |\,\widehat{\!\tilde f}(\bp)|^2 d\bp &= \int_{|\bp| \leq 1} \, \sqrt{|\bp|^2 + m^2}\, |\,\widehat{\!\tilde f}(\bp)|^2 d\bp + \int_{|\bp| > 1} \, \sqrt{|\bp|^2 + m^2}\, |\,\widehat{\!\tilde f}(\bp)|^2 d\bp \\
&\leq  \sqrt{1+ m^2}\int_{|\bp| \leq 1} |\,\widehat{\!\tilde f}(\bp)|^2 d\bp + \int_{|\bp| > 1} \, (|\bp|^2 + m^2) |\,\widehat{\!\tilde f}(\bp)|^2 d\bp < \infty\, .
 \end{align*}
\eproof
\blem\label{dis2}
Let $\tilde f$ be as in Lemma \ref{dis}. Then $\tilde f$ is a measurable function, $\nabla \tilde f$ exists almost everywhere and
\ben\label{Su}
S(\tilde f , 0 | x_1 > 0) =\pi\int_{x_1 > 0} x_1 \big(|\nabla \tilde f|^2 + m^2 \tilde f^2\big)d\bx \, .
\een
\elem
\proof
By Lemma \ref{dis}, $\langle \tilde f, 0\rangle \in \H_m$, so $S(\tilde f , 0 | B)$ is well-defined. If $\tilde f \in\S$, then the lemma is a particular case of 
\cite[Thm. 5.4]{L19}, and the same proof works to show \eqref{Su} in our case. 
\eproof
\bprop\label{prop:Beknotloc}
Let $B\subset \RR^d$ be a region with half-width $R>0$ and $\Phi$ a Klein-Gordon wave packet with time-zero Cauchy data $f,g\in S(\RR^d)$. Then
\[
S(\Phi| B) \leq 2\pi R E(\Phi| B) + \Ga_\Phi\, ,
\]
where the constant $\Ga_\Phi$ depends only on the restriction $\Phi|_{\partial B} =f|_{\partial B}$ of $\Phi$ to the boundary $\partial B$ of $B$. 
\eprop
\proof
By applying a rotation and translation, we may assume that $B$ is contained in the strip $\{0\leq x_1 \leq 2R\}$. 

The inequality \eqref{i1}
\[
S(\Phi| B)  \leq S(\Phi | \{ x_1 > 0\}) = \pi\int_{x_1 > 0} x_1 \big(|\nabla f|^2 + m^2 f^2 + g^2\big)d\bx
\]
holds by the monotonicity of the entropy. 

By the locality of the entropy,  $S(f,g | B) $ remains unchanged if we replace $f,g$ with $u, v$, with $u,v\in S(\RR^d)$ and $u= f$, $v= g$ on $B$. 
Indeed $f - u$ and $g- v$ are supported in the complement $B^c$ of $B$ and $H_m(B^c)$ is contained in the symplectic complement $H_m(B)'$ of $H_m(B)$. 

Therefore,
\begin{align*}
S(\Phi| B) &\leq \pi\int_{x_1 > 0}x_1 \big(|\nabla u|^2 + m^2 u^2 + v^2\big)d\bx 
\\
&= \pi \int_B x_1{T_{00}}_\Phi(x)d\bx + \pi \int_{x\notin B,x_1 > 0}x_1 \big(|\nabla u|^2 + m^2 v^2 + v^2\big)d\bx 
\\
&= 2\pi R\int_B {T_{00}}_\Phi(x)d\bx + \pi \int_{x\notin B,x_1 > 0}x_1 \big(|\nabla u|^2 + m^2 u^2 + v^2\big)d\bx 
 \, .
\end{align*}
So
\[
S(\Phi| B) \leq 2\pi R \,E(\Phi |\! | B) + \pi \inf_{u,v} {\mathfrak I} (u,v)\, ,
\]
where
\[
\frak I (u,v) \equiv \int_{x\notin B,x_1 > 0}x_1 \big(|\nabla u|^2 + m^2 u^2 + v^2\big)d\bx
\]
and the infimum is taken over all $u,v\in S(\RR^d)$ such that
$u|_{\partial B} = f|_{\partial B}$, $v|_{\partial B} = g|_{\partial B}$. 

Now, ${\frak I} (u,v) = {\frak I} (u,0) + {\frak I} (0,v)$ and 
\[
\inf\big\{ {\frak I} (0,v): v\in \S, u|_{\partial B} = g|_{\partial B}\big\} = 0\, .
\]
By Lemmas \ref{dis} and  \ref{dis2}, we may further allow $u$ to be any vector in $H_{m,+}$. 
Therefore, we may choose
\[
\Ga_\Phi = \Ga_f := \inf\big\{ {\frak I} (u): u\in H_{m,+}, u|_{\partial B} = f|_{\partial B}\big\}
\]
where
\ben\label{Iu}
\frak I (u) \equiv \int_{x\notin B,x_1 > 0}x_1 (|\nabla u|^2 + m^2 u^2)d\bx\, ,
\een
that is, $\frak I (u) \equiv \frak I (u,0)$. Clearly, $\Ga_\Phi$ depends only on $f|_{\partial B}$. 
\eproof
\bcor
If $\Phi$ vanishes on the boundary $\partial B$ of $B$, then
\[
S(\Phi| B) \leq 2\pi R\, E(\Phi| B) \, ,
\]
namely $\Ga_\Phi = 0$. 
\ecor
\proof
If $f|_{\partial B} = 0$, we may choose $u$ to be identically zero on $B^c$, so $\frak I (u) = 0$. 
\eproof

\subsection{A variational problem}
Let $B$ be a bounded region of the half-space $x_1  >0$, with regular boundary (say $B$ is a ball), and $h$ a smooth function on the boundary $\partial B$ of $B$. 
We many consider $\frak I(u)$ \eqref{Iu} also as a functional on 
${\cal S}_h$,  the set of smooth functions $u$ on the interior of $\{x\notin B, x_1 >0\}$,  rapidly decreasing at infinity, continuous up to the boundary, such that $u|_{\partial B} = h$. Equivalently, we could consider $\frak I(u)$ \eqref{Iu} as a functional on the linear space $\{ u\in \S\, : u|_{\partial B} = h\}$.
Set
\[
\Ga_h \equiv \inf_{u\in {\cal S}_h} \frak I(u) \, .
\]
Of course, we may consider the functional $\frak I$ on different function classes. 
The problem is to estimate $\Ga_h$ in terms of $h$. 

Consider the Hilbert space $H_{x_1}$ (weighted Sobolev space) obtained as the completion of $\S$ with norm
\[
\|u\|_{x_1} \equiv \left(\int_{x\notin B, x_1 >0}x_1\big(m^2 u^2 + |\nabla u|^2\big)dx\right)^{1/2}\, .
\]

\bprop
Let $h$ be a smooth function on $\partial B$. 
There exists $u_h\in H_{x_1} $ that minimises $\frak I$, namely is $u_h |_{\partial B} = h$ and
\[
\frak I(u_h) = \Ga_h\, .
\]
The restriction of $u_h$ to the complement of $B$ in the half-space $\{x_1 >0\}$ is uniquely determined and satisfies
\ben\label{L}
\partial_{x_1} u_h = 
x_1(m^2 u_h- \nabla^2 u_h )   \, ,
\een
\eprop
\proof
The set $\S_h =\{u \in \S: u|_{\partial B} = h\}$, this should be a typo) is convex, so its closure has a vector $u_h\in \S_h$ of minimal norm.  Since $\frak I$ is strictly convex, $u_h$ is unique as element of $H^1$, i.e. the restriction of $u_h$ to $B^c\cap\{x_1 > 0\}$ is uniquely determined. 
Then, eq. \eqref{L} holds true. 
\eproof
Note that
\[
\Ga _h = \frak I(u_h) = \frac12 \int_{\partial B}x_1 \partial_{\bf n}(u_h^2)dS = \int_{\partial B}x_1 h\, \partial_{\bf n}(u_h)dS
\]
(normal derivative). We leave ot the problem to estimate this integral in terms of $h$. 

Note that $\Ga_h \equiv \Ga(h, m, B)$ satisfies the following properties. 
\bprop
We have:
\begin{itemize}
\item $\Ga(h, m, B)$ is convex both in $h$ and in $m$
\item $\Ga( \a h, m, B) = \a^2 \Ga(h, m, B)$, $\a\in \RR$
\item $\Ga(h, m, B)$ is non-decreasing in $m$, thus with a minimum at $m=0$
\item $\Ga(h_\l,  \l m, \l B) =\l^2 \Ga(h, m, B)$, where $\l > 0$ and $h_\l(x) = h(\l^{-1}x)$
\end{itemize}
\eprop

\subsection{Some remarks on the fermionic case}
A natural question is if the above analysis can be extended to the fermionic case. We discuss below the main obstructions to this end.

We start by fixing the notations we will use in this subsection. For simplicity, we restrict to $3+1$ dimensions, and, as usual, $g=[g_{\mu\nu}]_{\mu,\nu=0,\dots,3}$ will denote Minkowski metric on $\bR^4$, whose elements $x=(x^0,x^1,x^2,x^3) = (x^0, \bx)$ will have contravariant (upper) indices. Indices are lowered and raised by $g$ and $g^{-1}=[g^{\mu\nu}]$  respectively, and we employ the summation convention over repeated upper and lower indices, so that $x\cdot y = x_\mu y^\mu = x_0 y_0 - \bx \cdot \by$ is the Minkowski product. Given $\bp \in \bR^3$ and $m \geq 0$, we set $\mu_m(\bp) := \sqrt{|\bp|^2+m^2}$ as in Section \ref{SectWH}
and use the shorthand $p_+ := (\mu_m(\bp),\bp)$ for the 4-vector on the positive mass $m$ shell corresponding to $\bp$. The gamma matrices $\gamma^\mu$ satisfy
\[
\{ \gamma^\mu, \gamma^\nu \} = 2g^{\mu\nu},
\]
with $\{\cdot,\cdot\}$ the anticommutator, and we choose a representation in which the transpose of $\gamma^0$ satisfies $(\gamma^0)^t = \pm \gamma^0$. Moreover, we use the Dirac notation $\slashed{x} := x_\mu \gamma^\mu$ for $x \in \bR^4$. The charge conjugation matrix $C$ is such that $C^t = C^* = -C = C^{-1}$ and $C\gamma^\mu C^{-1} = -(\gamma^\mu)^t$. Finally, 3- and 4-dimensional Fourier transforms are defined respectively by 
\[
\hat g(\bp):= \int_{\bR^3} g(\bx)e^{-i\bp\cdot\bx}d\bx, \qquad \hat f(p) := \int_{\bR^4} f(x) e^{i p\cdot x} dx.
\]

Now, given a mass $m \geq 0$, we consider the real  linear space $\mathcal M \subset C^\infty(\bR^4,\bC^4)$ of solutions $\Psi$ of the Dirac equation
\begin{equation}\label{eq:Dirac}
(i \partial_\mu \gamma^\mu - m) \Psi = 0
\end{equation}
with Cauchy data $\Psi_0 := \Psi|_{x_0 =0} \in S(\bR^3, \bC^4)$ and satisfying the Majorana condition  $C(\gamma^0)^t \bar \Psi = \Psi$ (Majorana waves). $\cM$ becomes a complex space when endowed by the complex structure $\imath : \cM \to \cM$ defined by
\[
(\imath \Psi)_0\hat{\,}(\bp) := \frac i{\mu_m(\bp)}(m-p_j \gamma^j) \gamma^0\hat \Psi_0(\bp), \qquad \bp \in \bR^3.
\]
The universal covering of the Poincar\'e group naturally acts on $\mathcal M$, and defining the positive frequency part of $\Psi \in \cM$ as $\hat \Psi_+(p) := \theta(p_0) \hat \Psi(p)$, $p \in \bR^4$, the bilinear form
\[
( \Psi, \Phi) := \int_{x^0=0} d\bx \,\Psi_+(x)^* \Phi_+(x) = \frac 1{(2\pi)^3}\int_{\bR^3} \frac{d\bp}{2\mu_m(\bp)}\hat\Psi_0(\bp)^* (\slashed{p}_++m)\gamma^0 \hat\Phi_0(\bp), 
\]
is actually sesquilinear with respect to the complex structure $\imath$, and defines a Poincar\'e invariant scalar product on $\cM$. We denote by $\cK_m$ the complex Hilbert space obtained by completing $\cM$ with respect to this scalar product. For a ball $B \subset \bR^3$, the closed real subspace of $\cK_m$
\[
K_m(B) := \{ \Psi \in \cM \,:\, \supp \Psi_0 \subset B\}^-
\]
is standard.

The Majorana field is the (spinor valued) Wightman field $\psi$ on the fermionic Fock space $\Gamma_-(\cK_m)$ defined as follows: for $f \in S(\bR^4,\bC^4)$ such that $C (\gamma^0)^t \bar f = f$ define $\Psi_f \in \cM$ by
\[
\Psi_f(x) = i \int_{\bR^4} dy\, (i \gamma^\mu \partial_\mu + m)D(y-x) \gamma^0\overline{f(y)}, \qquad x \in \bR^4,
\]
with $D$ the causal propagator of the mass $m$ Klein-Gordon field; 
then the Majorana field smeared with $f$ is
\[
\psi(f) := a(\Psi_f) + a(\Psi_f)^*,
\]
with $a(\Psi)$, $\Psi \in \cK_m$, the generators of the CAR algebra on $\Gamma_-(\cK_m)$; for arbitrary $f \in S(\bR^4,\bC^4)$, the smeared field $\psi(f)$ is obtained by complex linearity. The local von Neumann algebra of the Majorana field associated to a double cone $O$ with basis a ball $B$ in the $x^0 = 0$ plane is then
\[
\cF(O) := \{ \psi(f)+\psi(f)^* \,: \, \supp f \subset O\} '' = \{ a(\Psi) + a(\Psi)^*\,:\, \Psi \in K_m(B)\}'',
\]
the fermionic second quantization of $K_m(B)$. Algebras for more general double cones are obtained by Poincar\'e covariance, and for more general regions by additivity.

The first problem that one has to face in trying to obtain some kind of Bekenstein bound for Majorana waves is the absence of a meaningul notion of entropy for them. In particular, it is not clear if there is a suitably large class of states of the Majorana field whose relative entropy with respect to the vacuum can be expressed directly in terms of Majorana waves, as in the case of coherent states of the Klein-Gordon field (note that the fermionic coherent states usually employed in the physics literature involve auxiliary ``Grassmann variables'', and are therefore not represented by vectors in $\Gamma_-(\cK_m)$).

One can then ask if a Bekenstein bound can be obtained for the relative entropy of some class of states at the second quantization level.  To this end, it is natural to consider first the class of one particle states. Indeed, due to the CARs, the square of  $\psi(f) + \psi(f)^*$ is a multiple of the identity, and then states of the form $e^{i (\psi(f)+\psi(f)^*)}\Omega$ are actually linear combinations of the vacuum and a one-particle state.

Thanks to the Bisognano-Wichmann theorem, it is then not difficult to show, similarly to~\cite{PM25} (where the double cone, $m=0$ case is considered), that the relative entropy between a one particle state $\omega_\Psi = ( \Psi, (\cdot) \Psi)$, $\Psi \in \cM$ with $\supp \Psi_0 \subset \{ x^1 > a \}$, $a \in \bR$, and the Fock vacuum $\omega$ on the algebra $\cF(W_a)$ associated to the wedge $W_a := \{ x^1 > a\}''$ is given by
\begin{equation}\label{eq:Sfermionic}
S( \omega_\Psi |\!| \omega)_{\cF(W_a)} = \pi \int_{x^0 = 0} d\bx\, (x^1-a) (\Psi, T_{00}(x)\Psi),
\end{equation}
with
\[
T_{00}(x) = \frac i 2 [: \psi^*(x) \partial_0 \psi(x) : - :\partial_0 \psi^*(x) \psi(x):], \qquad x \in \bR^4,
\]
the quadratic form expressing the quantized energy density of the Majorana field. The above expression for $S(\omega_\Psi |\!| \omega)_{\cF(W_a)}$ is of course similar to~\eqref{SPhi}. One obtains then the following analogue of Prop.~\ref{bek1}.

\bprop
If $\Psi \in \cM$ is such that $\supp \Psi_0 \subset \bar B$, $B \subset \bR^3$ a ball of radius $R > 0$, and $O$ is a  double cone with basis $B$, then
\[
S(\omega_\Psi |\!\ \omega)_{\cF(O)} \leq 2\pi R E(\Psi | B)
\]
with $E(\Psi |B) := \int_B (\Psi, T_{00}(0,\bx)\Psi) d\bx$.
\eprop

\proof
Arguing as in the proof of Prop.~\ref{bek1}, it is sufficient to show that the integrand in~\eqref{eq:Sfermionic} vanishes outside of $B$. Indeed, one finds
\[\begin{split}
(\Psi&, T_{00}(0,\bx)\Psi) \\
&= \frac1{4(2\pi)^6} \int_{\bR^6} d\bp d\bq\left(\frac1{\mu_m(\bp)}+\frac 1{\mu_m(\bq)}\right)\hat\Psi_0(\bp)^*(\slashed{p}_+ + m)\gamma^0 (\slashed{q}_+ + m)\gamma^0 \hat\Psi_0(\bq) e^{i(\bq-\bp)\bx}\\
&=\frac1{4(2\pi)^6} \int_{\bR^6} d\bp d\bq\hat\Psi_0(\bp)^*\left[\left(\frac1{\mu_m(\bp)}+\frac 1{\mu_m(\bq)}\right)p_j q_k \gamma^j \gamma^k+ (q_k+p_k)\gamma^0 \gamma^k\right]\hat\Psi_0(\bq) e^{i(\bq-\bp)\bx}\\
&-\frac1{4(2\pi)^6} \int_{\bR^6} d\bp d\bq\hat\Psi_0(\bp)^*\left[\frac{p_k}{\mu_m(\bp)}\mu_m(\bq)+\mu_m(\bp)\frac{q_k}{\mu_m(\bq)}\right]\gamma^k\gamma^0 \hat\Psi_0(\bq) e^{i(\bq-\bp)\bx},
\end{split}\]
where the second equality follows from the fact that the first integral is manifestly real. Now, while the first integral in the right hand side can clearly be expressed as a sum of terms of the form
\[
(G*P_1 \Psi_0)(\bx)^* P_2 \Psi_0(\bx) \quad \text{or}\quad P_1'\Psi_0(\bx)^*(G'*P_2' \Psi_0)(\bx)
\]
with $G, G'$ tempered distributions and $P_j, P_j'$, $j=1,2$ differential operators, and therefore has support contained in that of $\Psi_0$, the second integral equals
\[
2\Re\left[-i (\mu_m \Psi_0)(\bx)^*\gamma^0 \gamma^k \partial_k(\mu_m^{-1} \Psi_0)(\bx)\right]
\]
which vanishes thanks to the Majorana condition. 
\eproof

It would then be interesting to try to also extend to the fermionic case the bound of Prop.~\ref{prop:Beknotloc} for non localized one particle states. However, while of course~\eqref{eq:Sfermionic} is expected to hold also in this case, its proof crucially depends on the fact that $\Psi$ can be obtained by acting on the vacuum with a unitary operator localized in $W_a$, which allows to express the relative modular operator between $\Psi$ and $\Omega$ in terms of the modular operator  for $\Omega$. Unfortunately, already for one particle states which are sums of a state with Cauchy data supported in the basis $\{x^1 > a\}$ of $W_a$ and one with Cauchy data supported in the complement, the problem of computing the relative modular operator becomes much more difficult.

\section{The entropy balance and ant formulas for wave packets}\label{eba}
 Let $\A$ be a Poincaré covariant, wedge dual net of von Neumann algebras on a Hilbert space. We restrict attention to $1+1$ dimensional Minkowski spacetime with coordinates $x = \langle x_0, x_1\rangle \in \RR^2$, and we denote by $\A{(x)} = \A(W{(x)})$  the von Neumann algebra associated with the right wedge $W(x)$ with vertex $x$, and by $U$ the 
translation unitary group (for simplicity, we consider the Minkowski space time in dimension $1+1$, the general case follows at once from this). Given a vector $\Phi\in\H$, we consider the relative entropy between the vector states $\Phi$ and $\Om$ on $\A(x)$ and $\A(x)'$ that we denote by
\ben\label{SSb}
S(x) = S(\Phi |\!| \Om)_{\A(x)}\, ,\qquad {\bar S}(x) = S(\Phi |\!| \Om)_{\A(x)'}\, .
\een
The {\it entropy balance} formula for light-like translations states that 
\ben\label{dS0}
S(a,a) - S(b,b)   =   {\bar S}(a,a) - {\bar S}(b,b) + 2\pi(b-a) (\Phi, P_+\Phi) \, ,
\een
$a,b\in\RR$, where $P_+$ is the generator of the right null-translation one-parameter unitary group, and was proved in \cite{CF18}. 

The version of the above formula for spatial translations
\ben\label{dS1}
S(0,a) - S(0,b)   =   {\bar S}(0,a) - {\bar S}(0,b) + 2\pi(b-a) (\Phi, P\Phi) \, ,
\een
with $P$ is the Hamiltonian, namely the generator of the time-translation unitary group, was obtained in \cite{HL}.

The {\it ant formula} states that,
if $\partial_a S(a,a)$ exists for some $a \in \RR$,  then 
\ben\label{antleft}
-\partial_a S(a,a) = 2\pi \inf_{u' \in \A(a,a)'} (u'\Phi, P_+ u'\Phi)\, ,
\een
 where the infimum is taken over isometries $u' \in \A(a,a)'$. The ant formula was considered in \cite{W17} and proved in \cite{CF18}; a different proof was given in \cite{HL}. 

Both formulas  \eqref{dS1} and \eqref{antleft} require the involved quantities to be well-defined. The ant formula has so far been proved for a dense subspace of vectors of the Hilbert space (possibly not the same in \cite{CF18} and \cite{HL}).

We are going to discuss the version of the above formulas in the wave packet context. 
Let $H$ be the local, Poincaré covariant net of standard subspaces on the wave Hilbert space $\H$ as in Section \ref{SB}. Set $H(x) \equiv H(W(x))$, with associated entropy operator $\E(x) = \E_{H(x)}$.  Therefore, from \eqref{EE'} we get
\ben
\E(x)  - \E(y)  = \bar\E(x)  - \bar\E(y) - \big(\log\Delta_{H(x)} - \log\Delta_{H(y)}\big)\, ;
\een
here, $x = \langle x_0,x_1\rangle$ and $y = \langle y_0,y_1\rangle$ are arbitrary points of the Minkowski plane $\RR^2$. So, by the Bisognano-Wichmann property, we have
\ben
\E(x)  - \E(y)  = \bar\E(x)  - \bar\E(y) -2\pi \big(K(y) - K(x)\big)\, ,
\een
where $ K(x)$ is the generator of the one-parameter unitary group of the boosts associated with the wedge $W(x)$. Now,
\[
K(y) - K(x) = -(y_1 - x_1) P - (y_0 - x_0) P_1
\]
with $P_1 = P_+  - P$ the generator of $x_1$-translation one-parameter unitary group, therefore
\ben\label{gef}
\E(x)  - \E(y)  = \bar\E(x)  - \bar\E(y) + 2\pi (y_1 - x_1) P + 2\pi (y_0 - x_0) P_1\, .
\een
We then have the following general entropy balance formula for wave packets where, given $\Phi \in \H$, by a slight abuse of notation we denote also by $S(x) := S(\Phi | H(x))$ the entropy of $\Phi$ with respect to $H(x)$.
\bprop\label{BE}
Let $\Phi\in\H$ be a vector in the domain of all operators in formula \eqref{gef}. Then
\ben\label{gBEE}
S(x)  - S(y)  = \bar S(x)  - \bar S(y) + 2\pi (y_1 - x_1) (\Phi, P\Phi) + 2\pi (y_0 - x_0) (\Phi, P_1 \Phi)\, ,
\een
with the notations in \eqref{SSb}. 
\eprop
\proof
It is enough to take the expectation values on $\Phi$ on both sides of the identity \eqref{gBEE}. 
\eproof
We now consider the ant formula in our context. Set for simplicity $S(a) = S(a,a)$, $a\in\RR$.  Proposition \ref{BE} gives
\[
S(a) - S(b) = \bar S(a) - \bar S(b) + 2\pi(b - a)(\Phi, P_+ \Phi) 
\]
so, if $b > a$, we have
\ben\label{ri}
-\frac{S(a) - S(b)}{a - b} \leq 2\pi(\Phi, P_+ \Phi) 
\een
by the monotonicity of the relative entropy. As $S$ is monotone, it is almost everywhere differentiable. By \eqref{ri}, if $S(a)$ is differentiable in $a$ and $\Phi$ is in the domain of $P_+$ and of $\E(b)$ for $b$ in a right neighbourhood of $a$, we have 
$- \partial_a S(a) \leq 2\pi(\Phi, P_+ \Phi)$, hence
\ben\label{ind}
-\partial_a S(a) \leq 2\pi \inf_{\Psi \sim_a\Phi}  (\Psi, P_+ \Psi) \, ,
\een
where $\Psi \sim_a \Phi$ means that $\Psi$ is in the domain of $P_+$ and of $\E(b)$ for $b$ in a right neighbourhood of $a$ and $\Psi - \Phi\in H'(a,a)$ , because the entropy $S(a)$ depends only on the class of $\Phi$ in the above sense. 

The following ant formula will show that the inequality \eqref{ind} is actually an equality. Note that this equality does not follow from the ant formula for the second quantized net von Neumann algebras; the latter is indeed a weaker inequality because there appear states that are not coherent. 
\bprop
\label{prop5.2}
Let $H$ be a local, Poincaré covariant net of standard subspaces on wedges on the Hilbert space $\H$; we assume the Bisognano-Wichmann property. With the above notations and with $\Phi$ a vector in the domain of $P_+$ and of $\E(b)$ for $b$ in a right neighbourhood of $a$,
\ben\label{ant2}
-\partial_a S(a) = 2\pi\inf_{\Psi \sim_a\Phi}  (\Psi, P_+ \Psi) \, ,
\een
at every point $a$ such that $\partial_a S(a)$ exists. 

As a consequence, $\partial_a S(a)$ is increasing, so $S$ is a convex function. 
\eprop
\proof
We consider the net $a\in \RR \mapsto H(a) \equiv H(a,a)$ of standard subspaces on half-lines of $\RR$, and prove the statements for this one-dimensional net $H(a)$. Now $H(a)$ is, up to multiplicity, the conformal net associated with the $U(1)$-current algebra, see \cite[Sect. 2.6.1]{L08}, and we thus assume $H(a)$ is this net.
As discussed in App.~\ref{app:U1}, in this case Hilbert space vectors  $\Phi = \Phi_f$ can be described by (equivalence classes) of real distributions $f$ on $\bR$, and if $\Phi_f$ satisfies the assumptions then $\partial_a S(a)$ exists if and only if $f' \in L^2((a,+\infty), dx)$, and
\[
-\partial_a S(a) =  \pi\int_a^\infty f'(t)^2 dt.
\]
This easily implies
\[
- \partial_a S(a) =  \pi \inf_{h \sim f}\int_{-\infty}^{+\infty} h'(t)^2 dt = 2\pi \inf_{h \sim f} (\Phi_h, P_+ \Phi_h)
= 2\pi \inf_{\Psi\sim_a\Phi_f} (\Psi, P_+ \Psi)\, ,
\]
where the infimum is taken over all real $h$ in the Hilbert space such that $h'\in L^2(\RR)$, and $h \sim f$ means that $h = f$ on $[a,+\infty)$.  The third equality follows from the second one and the inequality \eqref{ind}. 
 
The last part of the statement now follows at once, a direct proof is contained in \cite{CLRR22}. 
\eproof

By a generalization of the above proof, one may obtain a 2-dimensional version of proposition \ref{prop5.2}, as follows. 
Alternatively to the net $a \in \RR \mapsto H^+(a) \equiv H(x^0 +a, x^1+a)$ of standard subspaces, similar up to a translation 
by $x = \langle x^0, x^1 \rangle \in {\mathbb R}^2$ to that considered in the proof, we may consider 
$a \in \RR \mapsto H^-(a) \equiv H(x^0-a, x^1+a)$. Then for the general 2-vector $x \in \RR^2$, we define the equivalence relation
$\Psi \sim_x \Phi$ to mean that $\Psi - \Phi\in H(x)'$, and we define the entropy $S(x)$ associated 
with the standard subspace $H(x)$ and vector $\Phi$ as above. Then, 
applying proposition \ref{prop5.2} to the nets $a \in {\mathbb R} \mapsto H^\pm(a)$ in 
turn and adding up the resulting variational principles with the weights $s^+, -s^-\ge 0$ yields 
\ben\label{ant3}
-s^\mu \partial_\mu S(x) = 2\pi\inf \{ s^+(\Psi_+, P_+ \Psi_+) - s^-(\Psi_-, P_- \Psi_-) \ : \ \Psi_\pm \sim_{x} \Phi \}
 \, ,
\een
where $P_- = P-P_1 \ge 0$, $s = \langle s^0, s^1 \rangle$ is a spacelike vector pointing to the right such that $s^\pm = (s^0 \pm s^1)/2$ are its null-components. By considering 
the 2-dimensional variational principle \eqref{ant2} instead of $x$ for a $y \in \RR^2$ to the spacelike right of $x$ (so that one is varying over a smaller set of vectors on the right side of \eqref{ant3}), one obtains a convexity property of $x \mapsto S(x)$:
\ben
\label{Sconv}
S(x+r+s)+S(x) \ge S(x+r)+S(x+s)
\een
for all spacelike vectors $r,s$ pointing to the right. This property can be seen as a generalization of \eqref{qdec1}. It is related to QDEC which we will discuss in \cite{HLM}.

\appendix

\section{Elementary formulas}
Let $f,g,u$ be smooth functions and $\bf V$ a smooth vector field on $\bar B$. By the divergence theorem, we have
\[
\int_B f\, {\rm div}({\bf V})d\bx = \int_{\partial B} f\, {\bf V}\cdot{\bf n}\,dS - \int_B \nabla f \cdot {\bf V} d\bx
\]
where $\bf n$ is the outward unit normal vector to the boundary, integrated with respect to its standard Riemannian volume form 
$dS$ on $\partial B$. 

Taking ${\bf V} = u\nabla g$, we have
\[ 
{\rm div}({\bf V}) = {\rm div}(u\nabla g) = \sum_i \partial_i(u \partial_i g) = \nabla u \nabla g + u \nabla^2 g
\]
\[
\int_B uf\, \nabla^2 g\, d\bx + \int_B f \nabla u \nabla g\, d\bx
= \int_{\partial B}u f\, \partial_{\bf n}g\,dS - \int_B u  \nabla  f  \nabla g \, d\bx \, ;
\]
setting $u(\bx) = 1 + r^2$, we hen have
\[
\int_B (1 + r^2)  \nabla  f  \nabla g \, d\bx 
= -\int_B (1 + r^2)f\, \nabla^2 g\, d\bx  - 2\int_B f\, r \partial_r g\, d\bx
+ \int_{\partial B}(1 + r^2) f\, \partial_{\bf n}g\,dS  \, ,
\]
that is
\[
\int_B (1 + r^2)  \nabla  f  \nabla g \, d\bx 
= (f, T g)
+ \int_{\partial B}(1 + r^2) f\, \partial_{\bf n}g\,dS  \, ,
\]
with $T = -(1 + r^2)\nabla^2 - r\partial_r$. 

We are interested in the case$f=g$, $u(\bx) = x_1$ too; we have:
\[
\int_B x_1 f\, \nabla^2 f\, d\bx + \int_B f  \partial_{x_1} f\, d\bx
= \int_{\partial B}x_1 f\, \partial_{\bf n}f\,dS -  \int_B x_1 | \nabla  f|^2   \, d\bx \, ;
\]
\medskip

\section{Finite entropy vectors for  the U(1) current}\label{app:U1}
 In this appendix, we compute the vectors of the one particle space of the U(1) current on the real line whose entropy with respect to the standard subspace of a bounded interval or of a half-line is finite. 
 
 In order to fix notations, we start by reviewing the construction of the net of standard subspaces of the U(1) current. Our convention for the Fourier transform  and anti-transform on $\bR$ is
 \[
 \hat \varphi(p) =\frac1{\sqrt{2\pi}} \int_\bR \varphi(x)e^{ipx}dx, \qquad \check \varphi(x) = \frac1{\sqrt{2\pi}} \int_\bR \varphi(p) e^{-ipx} dp.
 \] 
 Moreover all function spaces considered will consist of real valued functions.

 We denote by $S_1(\bR)$ and $\hat S_1(\bR)$ the subspaces of $S(\bR)$ of the (real) functions $\varphi \in S(\bR)$ such that
 \[
 \int_\bR \varphi(x)\,dx = 0 \qquad\text{and}\qquad   \varphi(0) = 0,
 \] 
 respectively, with the induced locally convex topology. Of course $\varphi \in S_1(\bR)$ if and only if $\hat \varphi \in \hat S_1(\bR)$. 
  Therefore, given $f \in S'_1(\bR)$, its Fourier transform is the $\hat f \in \hat S'_1(\bR)$ and its derivative is the $f' \in S'(\bR)$ defined by duality:
 \[
 \langle \hat f,  \varphi\rangle = \langle f, \hat \varphi \rangle, \quad \varphi \in \hat S_1(\bR); \qquad \langle f',\varphi\rangle =-\langle f,\varphi'\rangle, \quad \varphi \in  S(\bR).
 \]
We consider then the real linear space $\fL$ of distributions $f \in S_1'(\bR)$ whose Fourier transform $\hat f$ is a measurable function on $\bR$ such that
\[
\| \hat f \|^2:= \int_0^{+\infty} p |\hat f(p)|^2 dp < +\infty.
\] 

\begin{proposition}
The space $\fL$ equipped with the real scalar product
\[
\langle f,g\rangle:= \Re\left[ \int_0^{+\infty} p \hat f(-p) \hat g(p)\,dp\right], \qquad f,g \in \fL,
\] 
is a real Hilbert space.
\end{proposition}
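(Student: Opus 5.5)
The plan is to realize $\fL$ isometrically as a closed subspace of a weighted $L^2$ space via the Fourier transform. First, check that $\langle\cdot,\cdot\rangle$ is a well-defined real inner product. Since $f$ is real, $\hat f(-p)=\overline{\hat f(p)}$ almost everywhere; this is valid on $\hat S_1$ by duality and then for the measurable representative. Hence
\[
\langle f,g\rangle=\Re\int_0^{+\infty}p\,\overline{\hat f(p)}\,\hat g(p)\,dp ,
\]
which is absolutely convergent by Cauchy--Schwarz, real bilinear, symmetric, and satisfies $\langle f,f\rangle=\|\hat f\|^2$. For definiteness: if $\|\hat f\|=0$, then $\hat f=0$ a.e.\ on $(0,\infty)$, hence a.e.\ on $\bR$ by the reality relation. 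Therefore $\langle \hat f,\varphi\rangle=\int\hat f\varphi=0$ for all $\varphi\in\hat S_1(\bR)$, so $f=0$ in $S_1'(\bR)$.

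A preliminary point is needed here. For $f\in\fL$, the functional $\varphi\mapsto\int\hat f\varphi$ must make sense on $\hat S_1$. This holds because, for $\varphi(0)=0$, we have $|\varphi(p)|\le C|p|$ near $0$. Then
\[
\int|\hat f\varphi|\le \|\hat f\|_{L^2(|p|dp)}\,\|\varphi\|_{L^2(|p|^{-1}dp)},
\]
and the last norm is finite and continuous on $\hat S_1$ with respect to Schwartz seminorms. This estimate is the key tool, and I regard it as the main technical point.

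For completeness, take a Cauchy sequence $f_n$ in $\fL$. Then $\hat f_n$ is Cauchy in $L^2((0,\infty),p\,dp)$, which is complete. Extend to $L^2(\bR,|p|dp)$ with the reality constraint $\hat f(-p)=\overline{\hat f(p)}$, which is preserved under $L^2$ limits. This gives a limit $F$. Define $f\in S_1'$ by $\langle f,\hat\varphi\rangle:=\int F\varphi$ for $\varphi\in\hat S_1$; this is continuous by the estimate above. Then $f$ is real, since $F$ satisfies the symmetry, and $\hat f=F$ is measurable with finite norm, so $f\in\fL$. Finally, $\|f_n-f\|=\|\hat f_n-F\|_{L^2(p\,dp)}\to0$.

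Thus $\fL$ is isometric to the real Hilbert space of functions in $L^2(\bR,|p|dp)$ satisfying $F(-p)=\overline{F(p)}$. This space is a closed real subspace, and the map is onto because every such $F$ defines an element of $\fL$ as above. Hence $\fL$ is a real Hilbert space.
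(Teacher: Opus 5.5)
Your proof is correct and follows essentially the same route as the paper: you pass to the limit of $\hat f_n$ in $L^2(\bR_+,p\,dp)$ and use that elements of $\hat S_1(\bR)$ vanish at $p=0$, so Cauchy--Schwarz against the weight $|p|^{-1}$ makes the limit a well-defined element of $\hat S_1'(\bR)$, hence of $\fL$. You are in fact more careful than the paper, which leaves implicit the definiteness of the form, the extension to $p<0$ via the reality relation, and the continuity of the limit functional in the Schwartz topology.
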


\proof
If $\{ f_n\} \subset \fL$ is a Cauchy sequence, $\{ \hat f_n\} \subset L^2(\bR_+, p dp)$ is also Cauchy (with respect to the $L^2$ norm). Let $\hat f \in L^2(\bR_+,p dp)$ be its limit. Now if $\varphi \in S_1(\bR)$ then 
\[
\langle \hat f, \check \varphi \rangle := \int_\bR \hat f(p) \check \varphi(p)dp = \frac1{2i} \Im\left[ \int_0^{+\infty} p \overline{\hat f(p)}\frac{\check \varphi(-p)}{p}\right]
\]
is well defined, since 
$\check \varphi \in \hat S_1(\bR)$ satisfies $\check \varphi(p) \sim p$ for $p \to 0$, and therefore $p \mapsto \check \varphi(-p)/p$ belongs to $L^2(\bR_+,p dp)$. As a consequence $\hat f \in \hat S_1'(\bR)$, and $f \in S_1'(\bR)$ is the limit of $\{f_n\}$ in $\fL$.
\eproof

We also consider the complex structure $\imath : \fL \to \fL$ defined by $(\imath f)\hat{\,}(p) = i \operatorname{sign}(p) \hat f(p)$, $f \in \fL$. Thus $\fL$ becomes a complex Hilbert space with the complex scalar product
\[
(f,g) := \langle f,g\rangle - i\langle f,\imath g\rangle = \int_0^{+\infty} p \hat f(-p) \hat g(p)\,dp, \qquad f, g \in \fL,
\]
and
\[
\beta(f,g) := \Im (f,g) = -\langle f,\imath g\rangle = \frac{1}2 \int_\bR \hat f(-p) \widehat {g'}(p)\,dp, \qquad f,g \in \fL,
\]
is a symplectic form on $\fL$.

\begin{proposition}\label{prop:density}
$C^\infty_c(\bR)$ is dense in $\fL$.
\end{proposition}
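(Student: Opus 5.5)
To prove that $C^\infty_c(\bR)$ is dense in $\fL$, I would work on the Fourier side. The map $f\mapsto \hat f|_{\bR_+}$ identifies $\fL$ isometrically with $L^2(\bR_+, p\,dp)$, because $f$ is real and so $\hat f(-p)=\overline{\hat f(p)}$. It therefore suffices to show that $\{\hat\varphi|_{\bR_+} : \varphi\in C^\infty_c(\bR)\}$ is dense in $L^2(\bR_+,p\,dp)$.

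A preliminary point concerns which test functions lie in $\fL$. Since $\fL\subset S_1'(\bR)$, and constants vanish as functionals on $S_1(\bR)$, a function $\varphi\in C^\infty_c(\bR)$ is regarded as an element of $\fL$ via its class in $S_1'$. Its Fourier transform is then the smooth function $\hat\varphi$, viewed as a functional on $\hat S_1(\bR)$. It is square integrable against $p\,dp$, since $\hat\varphi$ is Schwartz. Hence $C^\infty_c(\bR)\subset\fL$.

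The approach is to take the orthogonal complement. Let $h\in L^2(\bR_+,p\,dp)$ be complex-valued, and suppose $\Re\int_0^\infty p\,\overline{h(p)}\,\hat\varphi(p)\,dp=0$ for all real $\varphi\in C^\infty_c(\bR)$.

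1. Use the complex structure. The space of test functions is invariant under $\imath$ only up to closure, so I would avoid relying on it. Instead, I would use the fact that $\hat\varphi(p)$ for real $\varphi$ ranges over all functions of the form $\hat\psi$ with $\psi$ real, and that replacing $\varphi$ by its derivative $\varphi'$ multiplies $\hat\varphi$ by $-ip$. Taking both $\varphi$ and, for example, Hilbert-type combinations is awkward, so a simpler route is needed.

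2. The simpler route is to note that for real $\varphi$, $\int_0^\infty p\,\overline{h}\,\hat\varphi\,dp$ has real part $\frac12\int_\bR |p|\, \overline{H(p)}\,\hat\varphi(p)\,dp$. Here $H$ is the Hermitian extension, $H(-p)=\overline{h(p)}$. This shows that the tempered distribution $G:=(|p|H)^\vee$ satisfies $\langle G,\varphi\rangle=0$ for all real, and hence all complex, $\varphi\in C^\infty_c(\bR)$. So $G=0$ as a distribution.

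3. Conclude. Since $|p|H\in L^1_{loc}$, as Cauchy–Schwarz with weight $p$ shows near $0$ (giving $\int_0^1 p|h|\le (\int_0^1 p)^{1/2}\|h\|$), its Fourier transform vanishing forces $|p|H=0$ a.e. Therefore $h=0$ in $L^2(\bR_+,p\,dp)$, which proves density.

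The main obstacle is step 2: checking carefully that $|p|H$ really defines a tempered distribution, so that Fourier uniqueness applies. Near $0$ this follows from the weighted $L^2$ bound. At infinity, $\int_1^\infty |p||h|\,(1+p)^{-N}\,dp\le \|h\|\,(\int_1^\infty p(1+p)^{-2N}dp)^{1/2}$, which is finite. Since the real pairing loses nothing for real test functions, the remaining verification is routine.
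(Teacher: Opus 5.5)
Your proof is correct, but it takes a different route from the paper's.

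\textbf{Your route.} You argue by duality on the Fourier side. Having identified $\fL$ isometrically with $L^2(\bR_+,p\,dp)$ (only the isometric embedding is actually needed), you show that the real orthogonal complement of $\{\hat\varphi|_{\bR_+}:\varphi\in C^\infty_c(\bR)\}$ is trivial. The key point is that the Hermitian extension $|p|H$ is a tempered $L^1_{\rm loc}$ function. Its inverse Fourier transform annihilates $C^\infty_c(\bR)$, hence all of $S(\bR)$, so it vanishes. Your step 1 is an abandoned detour and can simply be deleted.

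\textbf{The paper's route.} The paper proceeds constructively in three stages. It first cuts $\hat f$ off near $p=0$, so that $f\in L^2(\bR)$. It then mollifies, with convergence in $\fL$ by dominated convergence on the Fourier side. Finally it truncates in position space, controlling the $\fL$-norm through the interpolation estimate \eqref{eq:normL2bound}, $\|g\|^2\le\frac12\|g\|_{L^2(\bR)}\|g'\|_{L^2(\bR)}$.

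\textbf{What each buys.} Your argument is shorter and needs nothing beyond Fourier uniqueness in $S'(\bR)$ and the temperedness check you carry out. The paper's argument is longer but produces tools it needs later. The estimate \eqref{eq:normL2bound} and the mollify-and-truncate scheme are reused in the appendix: once for the core property of $\log\Delta_{H(I)}$, and once to show $\chi_B h_1\in H(B)$ in the final theorem. With your approach, those would have to be established separately.
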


\proof
We start by observing that for $f \in \fL$ one has
\[
\| f - (\chi_{[\varepsilon,+\infty)} \hat f)\check{\,}\|^2 = \int_0^{\varepsilon} p |\hat f(p)|^2dp \to 0
\]
as $\varepsilon \to 0$, and it is then sufficient to prove that for any $\varepsilon > 0$, $C^\infty_c(\bR)$ is dense in the subspace of elements $f \in \fL$ whose Fourier transform vanishes in $(-\varepsilon, \varepsilon)$. Let $f$ be such an element of $\fL$. Then, clearly, $\hat f \in L^2(\bR)$ and therefore $f \in L^2(\bR)$. Now, given a non negative $\rho \in C^\infty_c(\bR)$ with support in $[-1,1]$ and such that $\int_\bR \rho = 1$, a sequence of mollifiers is obtained by $\rho_n(x) := n \rho(nx)$, $x \in \bR$, $n \in \bN$. Then, for all $n \in \bN$, $\rho_n * f$ is in $C^\infty(\bR) \cap L^2(\bR) \subset \fL$ with all its derivatives $(\rho_n * f)^{(h)} = \rho_n^{(h)}*f$ (by the Young inequality), and
\[
\|\rho_n * f - f\|^2 = \int_0^{+\infty} p |\sqrt{2\pi}\hat \rho(p/n)-1|^2 |\hat f(p)|^2 \to 0
\]
as $n \to +\infty$, by dominated convergence. It is then sufficient to show that if $f$ belongs to $C^\infty(\bR) \cap L^2(\bR)$ with all its derivatives it can be approximated in $\fL$ with $C^\infty_c(\bR)$ functions. To this end, take a non negative $\chi \in C^\infty_c(\bR)$ such that $\supp \chi \subset [-2,2]$ and $\chi = 1$ on $[-1,1]$, and define a sequence of cut-off functions by $\chi_n(x) := \chi(x/n)$, $n \in \bN$, $x \in \bR$. Then $g_n := \chi_n f-f \in L^2(\bR) \cap C^\infty(\bR)$ and
\begin{equation}\label{eq:normL2bound}
\| g_n\|^2 = -\beta(\imath g_n, g_n) = -\frac{1}2 \int_\bR (\imath g_n)(x) g_n'(x)\,dx \leq \frac12 \| g_n\|_{L^2(\bR)} \| g'_n\|_{L^2(\bR)},
\end{equation}
so it is sufficient to show that $g_n, g'_n \to 0$ in $L^2(\bR)$. It is clear that $g_n \to 0$. Moreover,
\[
\| \chi'_n f \|_{L^2(\bR)}^2 = \frac1{n^{2}} \int_{\bR} | \chi'(x/n) f(x)|^2 dx \leq \frac{\|\chi'\|_\infty^2}{n^{2}} \|f\|^2_{L^2(\bR)}\to 0,
\]
while
\[
\| (\chi_n-1) f'\|^2_{L^2(\bR)} \leq \int_{|x|> 2n} |f'(x)|^2dx \to 0.
\]
Thus $g'_n \to 0$ in $L^2(\bR)$ as $n \to +\infty$, and the proof is complete.
\eproof

Since the only constant $f \in \fL$ is everywhere vanishing, we can consider the (algebraic) direct sum $X:= \fL+ \bR$, and its quotient $\cH := X/\bR$  coincides with the set of equivalence classes of elements $f \in \fL$, which we will denote by $[f]$, or just by $f$ if there is no danger of confusion. Therefore, the complex structure $\imath$, the complex scalar product $(\cdot,\cdot)$ and the symplectic structure $\beta$ all lift to $\cH$ by
\[
\imath [f] := [\imath f], \quad ( [f], [g] ) := (f,g), \quad \beta([f], [g]) := \beta(f,g), \quad f, g \in \fL,
\]
thus making $\cH$ a complex Hilbert space.

The reason for passing from $\fL$ to $\cH$ is that on $\cH$ there is a natural unitary action of $\PSL2$, defined as follows. We identify the circle $\bS^1$ (tought as a subset of $\bC$) with the one point compactification $\bar \bR$ of $\bR$ via the stereographic projection, which sends $-1 \in \bS^1$ to $\infty \in \bar\bR$. As a consequence, we identify functions on $\bS^1$ and functions on $\bar \bR$ (and their restrictions to $\bR$). 

\begin{proposition}
If $f \in C^\infty(\bS^1)$, then $f - f(\infty) \in \fL$. 
\end{proposition}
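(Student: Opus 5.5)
The plan is to check the three defining properties of $\fL$ for $\varphi := f - f(\infty)$. These are: $\varphi$ is a tempered distribution that we may regard in $S_1'(\bR)$; $\hat\varphi$ is a measurable function; and $\int_0^{+\infty} p|\hat\varphi(p)|^2dp<+\infty$.

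\emph{Step 1: behaviour of $\varphi$ at infinity.} We use the stereographic coordinate. Smoothness of $f$ at $-1\in\bS^1$ means that $x\mapsto f(x)$ is smooth on $\bR$ and that $t\mapsto f(-1/t)$ extends smoothly through $t=0$ with value $f(\infty)$. Hence $f(x)-f(\infty) = c/x + O(x^{-2})$ as $|x|\to\infty$. Likewise $f'(x) = O(x^{-2})$, and each derivative $f^{(k)}$ is $O(x^{-k-1})$. In particular $\varphi$ is bounded, smooth and $O(1/|x|)$, so $\varphi\in L^2(\bR)$ and $\varphi'\in L^1(\bR)\cap L^2(\bR)$.

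\emph{Step 2: membership and the Fourier condition.} Since $\varphi\in L^2(\bR)$, the transform $\hat\varphi$ is an $L^2$ function, hence measurable. As an element of $S'(\bR)$ it restricts to a functional on $S_1(\bR)$, so $\varphi\in S_1'(\bR)$. For the norm we use $\widehat{\varphi'}(p) = -ip\hat\varphi(p)$, which is valid because $\varphi'\in L^2$. Then
\[
\int_0^{+\infty}p|\hat\varphi(p)|^2dp \le \int_{|p|\le1}|\hat\varphi|^2dp + \int_{|p|>1}p^2|\hat\varphi|^2dp \le \|\varphi\|_{L^2}^2+\|\varphi'\|_{L^2}^2<\infty,
\]
exactly as in the proof of Lemma \ref{dis}.

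\emph{Step 3: the remaining subtlety.} We must make sure the distribution $\varphi$ is the right representative. Elements of $S_1'(\bR)$ are only defined modulo constants, and subtracting $f(\infty)$ selects the one that decays. This is consistent with the later quotient $\cH=X/\bR$, but Step 2 does not depend on it.

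The main obstacle is Step 1, namely justifying the decay estimates for the derivatives from smoothness at the point $-1$. We would handle it by the chain rule applied to $f(x)=F(-1/x)$, where $F$ is smooth near $0$ and $F(0)=f(\infty)$. Only the bounds $\varphi=O(1/x)$ and $\varphi'=O(1/x^2)$ are actually needed. Everything after that is the Plancherel argument.
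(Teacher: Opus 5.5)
Your proof is correct and follows essentially the same route as the paper: the $O(1/|x|)$ decay of $f-f(\infty)$ gives $L^2$, hence control of $\hat\varphi$ near $p=0$, and smoothness controls large $p$. The only difference is at large $p$: the paper simply asserts that $\hat g$ is rapidly decreasing because $g$ is smooth, whereas you use $\varphi'=O(x^{-2})\in L^2$ and Plancherel, which is a more explicitly justified version of the same step.
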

\proof
Indeed, $g(x) := f(x) - f(\infty) \sim \frac 1{x}$ for $|x| \to +\infty$, which entails $g \in L^2(\bR)$. Moreover, since $g$ is smooth, $\hat g$ is rapidly decreasing, and then $\hat g \in L^2(\bR_+,p\, dp)$.
\eproof

Thus, by Prop.\ \ref{prop:density}, $\cH$ is the completion of $C^\infty(\bS^1)$ with respect to the norm $\|\cdot\|$, and then it coincides with the one particle space of the U(1) current~\cite{GLW98}. The unitary, irreducible, positive energy representation $U$ of the M\"obius group $\PSL2$ with lowest weight $1$ acts then on $\cH$, and is fixed by
\[
(U(g)f)(x) =  f(g^{-1}x), \qquad f \in C^\infty(\bS^1), \,x \in \bR,
\]
where $g = \left[\begin{matrix} a&b\\c&d\end{matrix}\right] \in \SL2$ acts on $\bR$ as $gx = \frac{ax+b}{cx+d}$, $x \in \bR$. 

\begin{proposition}\label{prop:CScomplex}
$C^\infty(\bS^1)$ is a $U$-invariant, complex subspace of $\cH$.
\end{proposition}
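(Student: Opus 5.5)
The plan is to check the two properties separately: $U$-invariance is immediate, and the complex structure $\imath$ is handled by a Hardy-space decomposition on the circle.

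\emph{$U$-invariance.} Each $g\in\SL2$ acts on $\bar\bR\simeq\bS^1$ by a real-analytic diffeomorphism $x\mapsto \frac{ax+b}{cx+d}$. Hence for $f\in C^\infty(\bS^1)$ the function $f\circ g^{-1}$ is again in $C^\infty(\bS^1)$. Since $U(g)$ is defined on $C^\infty(\bS^1)$ by exactly this formula, and elements of $\cH$ are classes modulo constants, $U(g)C^\infty(\bS^1)\subset C^\infty(\bS^1)$. Real linearity of $C^\infty(\bS^1)$ is obvious, so complex linearity amounts to showing $\imath[f]\in C^\infty(\bS^1)$ for $f\in C^\infty(\bS^1)$.

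\emph{Hardy decomposition.} Expand $f$ (complexified, with $f$ real) in Fourier series on the circle, $f(z)=\sum_n c_n z^n$, with $(c_n)$ rapidly decreasing and $c_{-n}=\bar c_n$. Write $f=c_0+f_++f_-$, where $f_+=\sum_{n>0}c_nz^n$ and $f_-=\overline{f_+}$. Both $f_\pm$ are smooth on $\bS^1$. The function $f_+$ is the boundary value of a function holomorphic in the unit disc and smooth up to the boundary. Via the Cayley transform $z=\frac{x-i}{x+i}$ (or its inverse, depending on orientation), the disc corresponds to the upper or the lower half plane $\Pi$. Then $h_+(x):=f_+(x)-f_+(\infty)$ is the boundary value of a function holomorphic in $\Pi$ and smooth up to the boundary. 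It also decays like $1/x$ at infinity, by smoothness at the point $-1\in\bS^1$, exactly as in the preceding proposition. Hence $h_+$ lies in the Hardy space $H^2(\Pi)$. By Paley--Wiener, $\hat h_+$ is supported in a half-line $\{\pm p\ge 0\}$, and $\hat h_-=\overline{\hat h_+(-\,\cdot)}$ is supported in the opposite half-line.

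\emph{Action of $\imath$ and the main obstacle.} Since $(\imath h)\hat{\,}(p)=i\,\mathrm{sign}(p)\hat h(p)$, we get $\imath h_\pm=\pm i\,h_\pm$, possibly with both signs flipped depending on conventions. So on $\cH$,
\[
\imath[f]=[\,i h_+ - i h_-\,]=\Big[\,i\sum_{n>0}c_nz^n - i\sum_{n<0}c_nz^n\,\Big],
\]
up to an overall sign. This is the conjugate function of $f$ on the circle, whose Fourier coefficients are still rapidly decreasing, so it is smooth on $\bS^1$; it is real since $c_{-n}=\bar c_n$. The main point to verify carefully is the identification of $h_\pm$ as $H^2$ functions with one-sided Fourier support under the paper's Fourier convention $\hat\varphi(p)=\frac1{\sqrt{2\pi}}\int\varphi(x)e^{ipx}dx$. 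This only affects an overall sign, not smoothness. One must also check that subtracting the constants $f_\pm(\infty)$ is harmless, which holds because $\cH$ is a quotient by constants and $h_+ + h_- = f-f(\infty)\in\fL$.
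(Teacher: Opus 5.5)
Your proof is correct and is essentially the paper's argument: you split $f$ into its positive and negative circle frequencies, note that each piece (minus its value at $\infty$) has one-sided Fourier support on $\bR$, so that $\imath$ acts on it as $\pm i$ and returns the smooth conjugate function. The only difference is that the paper gets the one-sided support termwise for $z^k$ by contour integration and then needs a dominated-convergence step to pass the Fourier transform through the series, whereas you get it in one stroke for $f_+$ via Paley--Wiener. For that step, $H^2$ membership is most cleanly justified by the estimate $|f_+(z)-f_+(-1)|\le C|z+1|$ on the closed disc, which gives a decay bound uniform on horizontal lines and not just on $\bR$.
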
 
\proof
$U$-invariance is obvious. Next, if $f_k(z) : = z^k$, $z \in \bS^1$, $k \in \bZ$, it is easy to see, computing the Fourier transform (as functions of $x \in \bR$) by contour integration, that $(\iota (f_k-f_k(\infty)))\hat{\,}(p) = i \operatorname{sign}(p) (f_k-f_k(\infty))\hat{\,}(p)=i \operatorname{sign}(k) (f_k-f_k(\infty))\hat{\,}(p)$, $k \in \bZ$, $p \in \bR$. Moreover, $f \in C^\infty(\bS^1)$ if and only if it can be expanded as $f = \sum_{k \in \bZ} c_k f_k$ with rapidly decreasing $c_k = \overline{c_{-k}}\in \bC$, and, since we are quotienting out constants in $\cH$, we can assume $c_0= 0$. Then, by dominated convergence, the series $f-f(\infty) = \sum_{k \in \bZ} c_k (f_k - f_k(\infty))$ converges in $L^2(\bR,dx)$, and the same applies to $(f-f(\infty))\hat{\,} = \sum_{k \in \bZ} c_k (f_k-f_k(\infty))\hat{\,}$. In turn, this implies that $(f-f(\infty))\hat{\,}(p) = \sum_{k \in \bZ} c_k (f_k-f_k(\infty))\hat{\,}(p)$ for almost every $p \in \bR$, and thus, as elements in $\cH$, $\iota f = \sum_{k \in \bZ} i \operatorname{sign}(k) c_k f_k \in C^\infty(\bS^1)$.
\eproof

Let $\cI$ be the set of proper, open, non-dense intervals of $\bS^1 \equiv \bar \bR$. To each $I \in \cI$ we can associate the closed real subspace of $\cH$
\[
H(I) := \{ f \in \cH \, :\, f \in C^\infty(\bS^1), \,\supp f \subset I \}^- 
\]
(norm closure), thus obtaining  a M\"obius covariant net of closed real subspaces $I \mapsto H(I)$. As a consequence, for all $I \in \cI$, $H(I)$ is standard in $\cH$ and satisfies the Bisognano-Wichmann property
\[
\Delta_{H(I)}^{it} = U(	\delta_I(-2\pi t)), \qquad t \in \bR,
\]
with  $\delta_I : \bR \to \SL2$ the one parameter group of ``dilations'' of the interval $I$, namely $\delta_I(s) = g_I^{-1} \delta_{\bR_+}(s) g_I$, $s \in \bR$, where $g_I \in \SL2$ is such that $g_I I= \bR_+$, and $\delta_{\bR_+}(s) x = e^s x$, $x \in \bR$. In particular, for $B := (-1,1) \subset \bR$ one gets
\[
\delta_B(s)x =  \frac{\cosh(s/2) x + \sinh(s/2)}{\sinh(s/2)x+\cosh(s/2)} = \frac{1+x-e^{-s}(1-x)}{1+x+e^{-s}(1-x)}, \qquad x,s \in \bR.
\]

\begin{proposition}
For all $I \in \cI$, $C^\infty(\bS^1)$ is a core for the modular Hamiltonian $\log\Delta_{H(I)}$. Moreover, for $f \in C^\infty(\bS^1)$,
\begin{equation}\label{eq:logDelta}
(\iota \log\Delta_{H(\bR_+)}f)(x) = 2\pi x f'(x), \qquad (\iota \log \Delta_{H(B)}f)(x)= 2\pi (1-x^2) f'(x), \qquad x \in \bR.
\end{equation}
\end{proposition}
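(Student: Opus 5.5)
The plan is to use the Bisognano--Wichmann property together with Stone's theorem on the $U$-invariant core $C^\infty(\bS^1)$. The derivative of $U$ along a one-parameter subgroup is then a first-order differential operator acting on smooth functions.

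\emph{Core property.} $C^\infty(\bS^1)$ is a complex subspace by Prop.~\ref{prop:CScomplex}. It is invariant under $U(\delta_I(s))$, since these act by diffeomorphisms of the circle. It is also dense in $\cH$. Since $\Delta_{H(I)}^{it}=U(\delta_I(-2\pi t))$, Nelson's core theorem (a dense subspace invariant under a strongly continuous unitary group and contained in the domain of its generator is a core) reduces the claim to showing that $t\mapsto U(\delta_I(-2\pi t))f$ is differentiable in the norm of $\cH$ for $f$ smooth. I would obtain this differentiability from the representation theory of $\PSL2$: smooth functions on the circle are smooth vectors for the lowest weight $1$ representation. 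Concretely, $f=\sum c_k f_k$ has rapidly decreasing coefficients and $\|f_k\|^2\sim |k|$. The generators of $\PSL2$ act on the basis $f_k$ as $L_0,L_{\pm1}$ with matrix entries of size $O(|k|)$. Hence the formal derivative series converges in $\cH$. By Möbius covariance, $H(I)=U(g_I^{-1})H(\bR_+)$, so it suffices to treat $I=\bR_+$ and then transport, noting that $U(g_I)$ preserves $C^\infty(\bS^1)$.

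\emph{Formulas.} For $f$ smooth on the circle, the pointwise derivative gives
\[
\frac{d}{dt}\big(U(\delta_{\bR_+}(-2\pi t))f\big)(x)\Big|_{t=0}=\frac{d}{dt}f(e^{2\pi t}x)\Big|_{t=0}=2\pi x f'(x).
\]
The function $x f'(x)$ extends smoothly to $\infty$, because $f'(x)=O(x^{-2})$ there. I then need to check that this pointwise derivative agrees with the $\cH$-limit of the difference quotients. The plan is to show that the difference quotients converge in $L^2(\bR)$ and that their derivatives in $x$ also converge in $L^2(\bR)$, and then use the estimate \eqref{eq:normL2bound}. Equivalently, one can use that the $\cH$-norm is controlled by the $L^2$ norms of the function and its derivative. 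Since $\Delta^{it}=e^{it\log\Delta}$, the generator is $i\log\Delta$, which gives $\iota\log\Delta_{H(\bR_+)}f=2\pi xf'$; I would fix the sign by matching the convention of Stone's theorem. For $B=(-1,1)$, the explicit formula for $\delta_B$ gives $\partial_s\delta_B(s)x|_{s=0}=(1-x^2)/2$. With $s=-2\pi t$ this yields $2\pi\cdot\frac{1-x^2}{2}f'(x)\cdot$(factor). I would check the factor carefully against $\partial_s\delta_{\bR_+}=x$ under the Cayley map, expecting the stated $2\pi(1-x^2)f'$.

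\emph{Main obstacle.} The step I expect to be hardest is justifying convergence in the $\cH$-norm, i.e.\ the smooth-vector property, uniformly near $x=\infty$, where the flows act nontrivially on the compactified line. The Fourier-mode argument on $\bS^1$ handles this most cleanly, and I would use it rather than estimates on $\bR$.
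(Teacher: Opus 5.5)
Your core argument is the paper's: $C^\infty(\bS^1)$ is a dense, $U$-invariant complex subspace contained in the domain of the generator, and you reduce to $I=\bR_+$ by M\"obius covariance. Your half-line computation is also how the paper proves the first formula, with norm convergence obtained from $L^2$ bounds on the difference quotient and its derivative via \eqref{eq:normL2bound}.

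The problem is the interval formula. Your computation $\partial_s\delta_B(s)x|_{s=0}=(1-x^2)/2$ is correct. Since $(\Delta^{it}_{H(B)}f)(x)=f(\delta_B(2\pi t)x)$, it gives
\[
\imath\log\Delta_{H(B)}f=\pi(1-x^2)f'(x),
\]
and not $2\pi(1-x^2)f'(x)$. There is no hidden ``(factor)'' to recover. The Cayley check you propose confirms the value: with $y=g_B(x)=\frac{1+x}{1-x}$ one has $g_B'(x)=\frac{2}{(1-x)^2}$, so $y\partial_y$ pulls back to $\frac{y}{g_B'(x)}\partial_x=\frac{1-x^2}{2}\partial_x$.

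So the last step of your plan, ``expecting the stated $2\pi(1-x^2)f'$'', cannot be carried out. With the paper's own normalizations ($\delta_{\bR_+}(s)x=e^sx$, $\Delta^{it}_{H(I)}=U(\delta_I(-2\pi t))$), the stated constant is off by a factor $2$. The paper's proof does not catch this because it only says the second formula follows ``analogously''. The same slip reappears in $q_B$: transporting $S(f|H(\bR_+))=\pi\int_0^\infty x|f'|^2dx$ by $U(g_B)$ gives $S(f|H(B))=\frac{\pi}{2}\int_{-1}^1(1-x^2)|f'(x)|^2dx$. State the constant you actually derive and flag the discrepancy; do not insert an unexplained factor.

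A secondary point concerns your preferred Fourier-mode justification of differentiability, which is circular as phrased. Saying that the generators act on $f_k$ through $L_0,L_{\pm1}$ with $O(|k|)$ entries presupposes two things: that each $f_k$ lies in the domain of the generator, and that the generator acts there by the vector field. That is the statement being proved (for trigonometric polynomials), unless you import the identification of $U$ with the abstract lowest-weight representation.

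A non-circular version on the circle runs as follows. For $h\in C^\infty(\bS^1)$, $\|h\|^2=-\frac12\int_\bR(\imath h)h'\,dx$ is the integral of the one-form $(\imath h)\,dh$ over $\bS^1$. By Prop.~\ref{prop:CScomplex}, $\imath$ is the conjugate-function operator there. Normalizing $\imath h$ to mean zero and using Wirtinger gives $\|h\|\le C\|\partial_\theta h\|_{L^2(\bS^1)}$. The difference quotients of the smooth M\"obius flow converge in $C^1(\bS^1)$, so norm differentiability follows for every $I$ at once. This needs no separate analysis at $x=\infty$, which the paper instead handles through $h(x)\sim c/x$ and dominated convergence.
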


\proof
By Prop.~\ref{prop:CScomplex}, it is sufficient to show that $C^\infty(\bS^1)$ in contained in the domain of $\log\Delta_{H(I)}$, and by M\"obius covariance, it is sufficient to show it for $I = \bR_+$. To this end, observe first that since $f \in C^\infty(\bS^1)$ entails $f'(x) \sim \frac c{x^2}$ for $|x| \to +\infty$, the function $h(x) := x f'(x)$, $x \in \bR$, belongs to $C^\infty(\bR) \cap L^2(\bR)$, and therefore belongs to $\fL$. We define then
\[
g_t(x) := \frac1 t[\Delta_{H(\bR_+)}^{it}f-f](x)-2\pi h(x) = \frac{2\pi}t \int_0^t ds\,[h(x)-e^{2\pi s} h(e^{2\pi s}x)], \qquad x \in \bR,
\]
and observe that, thanks to~\eqref{eq:normL2bound},
\[
\left\| \frac1{t}[\Delta_{H(\bR_+)}f-f]-2\pi h\right\|^2 = \| g_t\|^2 \leq \frac12 \|g_t\|_{L^2(\bR)} \|g'_t\|_{L^2(\bR)}.
\]
Now clearly $g_t \to 0$ pointwise in $\bR$ as $t \to 0$, and from the fact that $h(x) \sim \frac c x$ for $|x| \to +\infty$ and that $h$ is bounded on $\bR$, it is easy to see that $g_t$ can be bounded by an $L^2(\bR)$ function uniformly for $t$ in a neighborhood of zero, thus $\| g_t \|_{L^2(\bR)} \to 0$ by dominated convergence. Since $h'(x) = 2f'(x) +x f''(x) \sim \frac{c'}{x^2}$ as $|x| \to +\infty$, the same argument applies to show that also $\|g'_t\|_{L^2(\bR)} \to 0$, thus proving that $f \in D(\log \Delta_{H(\bR_+)})$ and that the first formula in~\eqref{eq:logDelta} holds. The second formula in~\eqref{eq:logDelta} is obtained analogously, taking into account that $x \mapsto (1-x^2) f'(x)$ belongs to $\fL+\bR$.
\eproof

We also put on record the following general fact.

\begin{proposition}
Let $H$ be a standard subspace of the complex Hilbert space $\cH$. Then any core for $\log \Delta_H$ is also a core for the entropy operator $\cE_H$.
\end{proposition}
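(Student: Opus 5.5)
The plan is to show that $\E_H$, restricted to $D(\log\Delta_H)$, is relatively bounded with respect to $\log\Delta_H$. Once we have an estimate
\[
\|\E_H\Phi\| \leq C\big(\|\log\Delta_H\Phi\| + \|\Phi\|\big), \qquad \Phi\in D(\log\Delta_H),
\]
the statement is soft. Let $D$ be a core for $\log\Delta_H$ and $\Phi\in D(\log\Delta_H)$. Choose $\Phi_n\in D$ with $\Phi_n\to\Phi$ and $\log\Delta_H\Phi_n\to\log\Delta_H\Phi$. The estimate then gives $\E_H\Phi_n\to\E_H\Phi$. So the closure of $\E_H|_D$ contains $\E_H|_{D(\log\Delta_H)}$. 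By definition $\E_H$ is the closure of $iP_Hi\log\Delta_H$ on $D(\log\Delta_H)$, so the closure of $\E_H|_D$ is $\E_H$, which is the claim. Closures are taken for the real Hilbert structure $\Re(\cdot,\cdot)$; real-linearity causes no trouble, since only graph norms enter.

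The real work is the relative bound. I would first use the reduction to the factorial case, as in the definition recalled in Section 2, so that $P_H$ is well defined on $H+H'$. I would then use the known functional-calculus expression of the cutting projection in terms of the modular objects (as in \cite{CLR20}). Up to sign and ordering conventions, I expect it to take the form
\[
P_H = a(\Delta_H) + J_H\, b(\Delta_H), \qquad a(\l) = (1-\l)^{-1},\quad b(\l) = \l^{1/2}(1-\l)^{-1}.
\]
I would check this directly. First, $P_H$ should act as the identity on $H$ (using $S_H = J_H\Delta_H^{1/2}$ and $S_H\Phi=\Phi$ for $\Phi\in H$). Second, it should vanish on $H'$ (using $S_{H'} = J_H\Delta_H^{-1/2}$).

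Multiplying by $i$ on both sides and by $\log\Delta_H$ on the right, using that $i$ commutes with $\Delta_H$ and anticommutes with the antiunitary $J_H$, gives
\[
iP_Hi\log\Delta_H = -\,a(\Delta_H)\log\Delta_H + J_H\, b(\Delta_H)\log\Delta_H .
\]
The point is that the singularity of $a$ and $b$ at $\l=1$ is cancelled by $\log\l$. The functions $\l\mapsto \log\l/(1-\l)$ and $\l\mapsto \l^{1/2}\log\l/(1-\l)$ are continuous on $(0,\infty)$. They grow at most like $|\log\l|$ as $\l\to0$, and like $|\log\l|/\l^{1/2}$ or $|\log\l|/\l$ as $\l\to\infty$. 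Hence they are bounded by $C(1+|\log\l|)$, and the spectral theorem yields the relative bound on all of $D(\log\Delta_H)$. This also confirms that $i\log\Delta_H$ maps $D(\log\Delta_H)$ into the domain $H+H'$ of $P_H$.

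I expect the main obstacle to be making the formula for $P_H$ rigorous on the relevant domain. $P_H$ is unbounded and $a$, $b$ blow up at $\l=1$, so the identity only holds on vectors whose spectral measure for $\Delta_H$ avoids a neighbourhood of $1$, or in a suitable closed sense. I would first prove everything for vectors in the spectral subspaces $\chi_{\{|\log\l|>\varepsilon\}}(\Delta_H)\H$ intersected with $D(\log\Delta_H)$, where all operators are harmless. I would then remove $\varepsilon$ using the fact that the combined operators $a(\Delta_H)\log\Delta_H$ and $b(\Delta_H)\log\Delta_H$ are closed, and relatively bounded as above. If convention issues with the sign of $b$ or the placement of $J_H$ arise, the bound is insensitive to them, since only $|a\log|$ and $|b\log|$ enter.
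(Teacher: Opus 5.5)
Your proposal is correct and is essentially the paper's proof. The paper quotes \cite[Thm.\ 2.2]{CLR20} for the fact that $\cE_H$ is the closure of $-(1-\Delta_H)^{-1}\log\Delta_H + J_H(1-\Delta_H)^{-1}\Delta_H^{1/2}\log\Delta_H$, which is the same operator you derive from the functional-calculus form of $P_H$, and then concludes, as you do, from the $C(1+|\log\l|)$ bound on the two functions via spectral calculus. That citation settles the domain question you single out as the main obstacle. It can also be checked directly without an $\varepsilon$-cutoff: for $\Phi\in D(\log\Delta_H)$, set $g_1(\l)=\log\l/(1-\l)$ and $g_2(\l)=\l^{1/2}\log\l/(1-\l)$. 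Then $i g_1(\Delta_H)\Phi + J_H i g_2(\Delta_H)\Phi$ lies in $D(S_H)$ and is fixed by $S_H$, while the remainder of $i\log\Delta_H\Phi$ lies in $D(S_H^*)$ and is fixed by $S_H^*$.
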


\proof
We recall~\cite[Thm.\ 2.2]{CLR20} that $\cE_H$ is the closure of
\[
- (1-\Delta_H)^{-1}\log \Delta_H + J (1-\Delta_H)^{-1} \Delta_H^{1/2} \log \Delta_H,
\]
and since the functions $\l \mapsto \frac{\log \l}{1-\l}$, $\l \mapsto \frac{\sqrt{\l} \log \l}{1-\l}$ are continuous on $\bR_+$ and bounded by $c |\log \l|$ for $\l \to 0^+$ and $\l \to +\infty$, the statement follows by spectral calculus.
\eproof

\begin{corollary}\label{cor:coreE}
For all $I \in \cI$, $C^\infty(\bS^1)$ is a core for the entropy operator $\cE_{H(I)}$.
\end{corollary}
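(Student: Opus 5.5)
The corollary follows from the preceding general proposition applied to the core $C^\infty(\bS^1)$ of the modular Hamiltonian. Indeed, the proposition just before it shows that for every $I\in\cI$ the space $C^\infty(\bS^1)$ is a core for $\log\Delta_{H(I)}$, while the general proposition above says that any core for $\log\Delta_H$ is also a core for $\cE_H$. Putting the two together, with $H=H(I)$, gives the claim. Before concluding, I would check the hypotheses of the general proposition and one subtlety about the meaning of \emph{core} for a real linear operator.

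On the hypotheses: $H(I)$ is a standard subspace of $\cH$, as already noted from Möbius covariance, which is all the general proposition requires. Since $I$ is a proper interval, $H(I)$ is also factorial, $H(I)\cap H(I)'=\{0\}$, so the entropy operator is defined as in the preliminaries. I would record this for completeness; it does not enter the core argument.

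The subtlety is the following. $\cE_H$ is only real linear, being the closure of
\[
-(1-\Delta_H)^{-1}\log\Delta_H + J(1-\Delta_H)^{-1}\Delta_H^{1/2}\log\Delta_H ,
\]
so "core" must be understood with respect to the real graph norm. For the complex linear term, spectral calculus and the bound $|\log\l/(1-\l)|\le c(1+|\log\l|)$ show that the graph norm of $\log\Delta_H$ dominates that of $(1-\Delta_H)^{-1}\log\Delta_H$. For the second term, $J$ is isometric, and the same bound holds for $\sqrt\l\log\l/(1-\l)$. Hence convergence of a sequence in $C^\infty(\bS^1)$ in the graph norm of $\log\Delta_{H(I)}$ implies convergence in the graph norm of $\cE_{H(I)}$.

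Two consequences follow. First, $D(\log\Delta_H)$, which contains $C^\infty(\bS^1)$, lies in the domain of $\cE_H$. Second, the closure of $\cE_H$ restricted to $C^\infty(\bS^1)$ contains the restriction of $\cE_H$ to $D(\log\Delta_H)$, whose closure is $\cE_H$ by definition. This is exactly the content of the general proposition, so no step is a genuine obstacle: the only care needed is the real-linearity point above, together with the fact that by Prop.~\ref{prop:CScomplex} the space $C^\infty(\bS^1)$ is a complex subspace. That last fact is what makes it a core for the complex operator $\log\Delta_{H(I)}$ in the first place.
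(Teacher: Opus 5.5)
Your proposal is correct and follows the paper's route: the corollary is obtained by combining the preceding proposition ($C^\infty(\bS^1)$ is a core for $\log\Delta_{H(I)}$) with the general fact that any core for $\log\Delta_H$ is also a core for $\cE_H$. The paper proves that general fact with the same spectral-calculus bounds on $\log\lambda/(1-\lambda)$ and $\sqrt{\lambda}\log\lambda/(1-\lambda)$ that you spell out. Your extra remarks on the real graph norm and on factoriality are harmless but not needed.
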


We now consider the real quadratic forms $q_{\bR_+} : D(q_{\bR_+}) \to \bR$ and $q_B : D(q_B) \to \bR$ on $\cH$ defined as follows:
\begin{alignat*}{2}
&D(q_{\bR_+}) := \{ f \in \cH\,:\, f'|_{\bR_+} \in L^2(\bR_+, x \,dx)\},& \qquad &q_{\bR_+}(f) := \pi\int_0^{+\infty} x |f'(x)|^2 dx,\\
&D(q_B) := \{ f \in \cH\,:\, f'|_B \in L^2(B, (1-x^2)\,dx)\}, & &q_B(f) := \pi\int_{-1}^1 (1-x^2) |f'(x)|^2 dx.
\end{alignat*}
We will use the same symbols to denote the bilinear forms induced by these quadratic forms by polarization.

\begin{proposition}
The forms $q_{\bR_+}$, $q_B$ are closed.
\end{proposition}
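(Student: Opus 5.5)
To prove that $q_{\bR_+}$ and $q_B$ are closed, I will use the standard criterion: a nonnegative quadratic form $q$ on $\cH$ is closed iff it is lower semicontinuous with respect to the norm of $\cH$. Concretely, I will show that if $f_n\to f$ in $\cH$ and $\liminf q(f_n)<\infty$, then $f\in D(q)$ and $q(f)\le\liminf q(f_n)$. The two cases are identical once the weights $x\chi_{\bR_+}$ and $(1-x^2)\chi_B$ are fixed, so I describe only $q_{\bR_+}$. Note first that $q$ only involves $f'$, so it is well defined on equivalence classes modulo constants. Also, $f'$ makes sense for every $f\in\fL$ as a tempered distribution, with $\widehat{f'}(p)=-ip\hat f(p)$.

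\textbf{Step 1 (convergence in $\cH$ gives distributional convergence of derivatives).} If $f_n\to f$ in $\cH$, then $\int_\bR |p|\,|\hat f_n-\hat f|^2dp\to 0$. For $\varphi\in C^\infty_c(\bR_+)$ I can write
\[
\langle f_n'-f',\varphi\rangle=\int (-ip)(\hat f_n-\hat f)(p)\,\overline{\hat\varphi(p)}\,dp .
\]
By Cauchy--Schwarz this is bounded by $\big(\int |p||\hat f_n-\hat f|^2\big)^{1/2}\big(\int|p||\hat\varphi|^2\big)^{1/2}\to0$. Hence $f_n'\to f'$ in $\mathcal D'(\bR_+)$. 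The same holds in $\mathcal D'(B)$.

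\textbf{Step 2 (weak compactness in the weighted $L^2$).} Pass to a subsequence with $q(f_n)\to\liminf q(f_n)=:c<\infty$. Then $f_n'|_{\bR_+}$ is bounded in $L^2(\bR_+,x\,dx)$, so a further subsequence converges weakly there to some $g$. Since the weight is smooth and positive on $\bR_+$, weak convergence in $L^2(\bR_+,x\,dx)$ implies convergence in $\mathcal D'(\bR_+)$: test against $\varphi/x$ with $\varphi\in C_c^\infty(\bR_+)$. By Step 1 and uniqueness of distributional limits, $g=f'|_{\bR_+}$. Therefore $f\in D(q_{\bR_+})$. Weak lower semicontinuity of the norm then gives $q_{\bR_+}(f)\le c$.

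\textbf{Main obstacle.} The main delicate point is Step 1. Elements of $\cH$ are only distributions in $S_1'$ modulo constants, so I must make sure that the pairing of $f'$ with $\varphi\in C^\infty_c$ is well defined and continuous in the $\cH$-norm. This works because $\widehat{\varphi'}$ vanishes at $0$, so $f'$ pairs against all of $S(\bR)$ consistently with the duality fixed in the appendix. Moreover, $\int|p||\hat\varphi|^2<\infty$ for every Schwartz function $\varphi$, so the Cauchy--Schwarz bound above is finite. For $q_B$ the weight $1-x^2$ vanishes at $\pm1$, but it is smooth and positive on the open interval $B$, so the identification $g=f'|_B$ goes through exactly as for $\bR_+$.
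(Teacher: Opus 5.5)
Your proof is correct and rests on the same key point as the paper's proof. The map $f\mapsto\langle f',\varphi\rangle=-2\Im(f,\varphi)$ is norm-continuous on $\cH$, so $f_n\to f$ in norm gives $f_n'\to f'$ in $\mathcal D'(\bR_+)$ (resp.\ $\mathcal D'(B)$). Any limit of $f_n'$ in the weighted $L^2$ space is also a distributional limit, because the weight is bounded below on compacts, so the two limits must coincide.

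The only difference is packaging. The paper checks the definition of closedness directly: it takes a $q$-Cauchy, norm-convergent sequence, so $f_n'|_B$ converges strongly in $L^2(B,(1-x^2)\,dx)$. You instead go through the lower-semicontinuity criterion and a weak subsequential limit. That costs an appeal to the criterion, but gives $q(f)\le\liminf q(f_n)$ directly.
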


\proof
We prove the statement for $q_B$, the proof for $q_{\bR_+}$ being completely analogous. Let $\{ f_n\} \subset D(q_B)$ be norm convergent to some $f \in \cH$ and such that $q_B(f_n-f_m) \to 0$ as $n, m \to +\infty$. Then $\{f_n'|_B\}$ converges in $L^2(B,(1-x^2)\,dx)$ to some $g \in L^2(B,(1-x^2)\,dx)$, and it is therefore sufficient to show that $f'|_B = g$. To this end, take $\varphi \in C^\infty_c(B)$ and compute
\[\begin{split}
\langle f', \varphi\rangle&=-\langle f, \varphi'\rangle = -2 \Im (f, \varphi) = \lim_{n \to +\infty}-2 \Im(f_n, \varphi) = \lim_{n \to +\infty}\int_{-1}^1 dx f'_n(x) \varphi(x)  \\
&= \lim_{n \to +\infty} \int_{-1}^1 dx (1-x^2) f'_n(x) \frac{\varphi(x)}{1-x^2} = \int_{-1}^1 dx (1-x^2)g(x)\frac{ \varphi(x)}{1-x^2} = \langle g,\varphi\rangle,
\end{split}\]
where in the second equality we used the definition of the scalar product in $\cH$ in terms of the Fourier transforms $\hat f$, $\hat \varphi$, in the third equality the norm convergence of $\{f_n\}$ to $f$, in the fourth equality again the definition of the scalar product and the fact that $f_n'$ is a function on $\supp \varphi \subset B$, and in the sixth equality the convergence of $\{f'_n|_B\}$ to $g$ in $L^2(B,(1-x^2)\,dx)$. The statement is thus proven. 
\eproof

Since $q_B$ is clearly non-negative, there exist a real self-adjoint operator $A_B$ on $\cH$ (thought as a real Hilbert space with the real part of the scalar product), such that
\[
q_B(f,g) = \Re(f,A_B g),\qquad f,g\in D(q_B),
\]
and a similar statement holds for $q_{\bR_+}$.

We are now ready to prove the main result of this appendix.

\begin{theorem}
The quadratic forms $q_{\bR_+}$, $q_B$ coincide with the entropy quadratic forms $S(\cdot | H(\bR_+))$, $S(\cdot | H(B))$ respectively.
\end{theorem}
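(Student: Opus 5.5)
We want to show that $q_{\bR_+}$ and $q_B$ coincide with the closed quadratic forms $f\mapsto \Re(f,\cE_{H(I)}f)$, $I=\bR_+,B$. Both sides are closed, non-negative quadratic forms on $\cH$. So it suffices to check two things: that they agree on a common core of the entropy form, and that this core is also a core for $q_I$. The strategy therefore has two parts: an explicit computation on $C^\infty(\bS^1)$, and a closure argument.

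\emph{Step 1 (computation on smooth vectors).} Let $f\in C^\infty(\bS^1)$. By Corollary~\ref{cor:coreE} and \eqref{eq:logDelta}, $f$ lies in the domain of $\cE_{H(I)}=\imath P_{H(I)}\imath\log\Delta_{H(I)}$. Take $I=\bR_+$ and write $k:=\imath\log\Delta_{H(\bR_+)}f=2\pi x f'(x)$. This is a smooth function vanishing at $x=0$, and $x f'\sim c/x$ at infinity.

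To apply the cutting projection, I will split $k=k_+ + k_-$ with $k_+ := k\,\chi_{(0,\infty)}$ and $k_- := k\,\chi_{(-\infty,0)}$. Both pieces are continuous, because $k(0)=0$, and both are piecewise smooth and lie in $\fL$ (modulo constants). I then claim $k_+\in H(\bR_+)$ and $k_-\in H(\bR_-)=H(\bR_+)'$, using locality/Haag duality for the U(1) net. This needs an approximation result: a continuous, piecewise smooth element of $\cH$ supported in $\bar I$ and vanishing at $\partial I$ lies in $H(I)$. I would prove it by mollifying after a small dilation into $I$ and estimating norms via \eqref{eq:normL2bound}.

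Granting this, $P_{H(\bR_+)}k=k_+$. Hence
\[
S(f|H(\bR_+))=\Re(f,\imath k_+)=\beta(k_+,f)=\tfrac12\int_\bR k_+ f'\,dx=\pi\int_0^\infty x f'(x)^2dx ,
\]
where the third equality uses the formula for $\beta$ written in position space. The sign should be checked against $\beta(f,g)=-\langle f,\imath g\rangle$. For $B$ the argument is the same with $k=2\pi(1-x^2)f'$, which vanishes at $\pm1$, and with $H(B)'=H(\bR\setminus\bar B)$. Alternatively, the $B$ case follows from the $\bR_+$ case by Möbius covariance: with $g_B B=\bR_+$, both $S(\cdot|H(B))$ and $q_B$ transform correctly under $U(g_B)$, since the weight $x\,dx$ pulls back to a multiple of $(1-x^2)dx$ times the Jacobian factor.

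\emph{Step 2 (closure).} Both forms are closed. They agree on $C^\infty(\bS^1)$, which is a core for $\cE_{H(I)}$ and hence a form core for $S(\cdot|H(I))$. Therefore $S(\cdot|H(I))$ is the closure of $q_I$ restricted to $C^\infty(\bS^1)$, and so $S(\cdot|H(I))\le q_I$ as forms: $D(S)\subset D(q_I)$, with equality of values there.

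For the reverse inclusion I must show that $C^\infty(\bS^1)$ is a form core for $q_I$. Given $f\in D(q_I)$, I would approximate it in the norm $\|f\|^2+q_I(f)$ by smooth functions on the circle. The plan is to cut off at large $|x|$ (for $\bR_+$) and then mollify, controlling the $\cH$-norm by \eqref{eq:normL2bound} and the weighted $L^2$ part by dominated convergence, in the same way as in Prop.~\ref{prop:density}.

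\emph{Main obstacle.} I expect this last core statement to be the hardest step. Elements of $D(q_I)$ are controlled only by the weak weighted norm on $I$ and by the global $\cH$-norm, which is an $H^{1/2}$-type norm; there is no pointwise control of $f$ outside $I$. So the approximations have to act separately on $I$ and on its complement without spoiling either norm.

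I would try to decompose $f=\varphi f+(1-\varphi)f$ with a smooth partition $\varphi$ adapted to $I$. The piece $(1-\varphi)f$ can then be handled purely in $\cH$ by Prop.~\ref{prop:density}, with only a negligible change of $q_I$. The piece $\varphi f$ is treated by dilation toward the interior of $I$ followed by mollification, using that the weight vanishes at $\partial I$.

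The approximation lemma needed in Step 1 is a secondary, more routine difficulty.
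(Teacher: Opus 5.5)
Your Step~1 and the first half of Step~2 are sound and match the paper. Splitting $k=\imath\log\Delta_{H(I)}g$ as $\chi_I k+(1-\chi_I)k\in H(I)+H(I)'$, via the mollification approximation, gives $\cE_{H(I)}g=\imath\chi_I k$ for $g\in C^\infty(\bS^1)$. Since $C^\infty(\bS^1)$ is a form core for $S(\cdot|H(I))$ and $q_I$ is closed, you correctly obtain $D(S(\cdot|H(I)))\subset D(q_I)$ with equal values there.

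The genuine gap is the reverse inclusion $D(q_I)\subset D(S(\cdot|H(I)))$. This is the real content of the theorem, and you only sketch it. The sketch does not work as stated, because the whole difficulty sits at $\partial I$. There an element of $D(q_I)$ is controlled from inside only through the degenerate weight, and across $\partial I$ only through the $\cH$-norm. So whichever piece of a partition of unity contains $\partial I$ inherits the original problem. Dilating $\varphi f$ into $I$ makes things worse: it transports values of $f$ from outside $I$, where $f'$ has no $L^2$ control, into the region seen by $q_I$, so the approximants may even have $q_I=+\infty$. A direct proof would need deformations going the other way, e.g.\ $f\mapsto f(\lambda\,\cdot)$ with $\lambda\uparrow 1$ for $B$. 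This is a M\"obius map, hence strongly continuous on $\cH$; it does not increase $q_B$ and makes $f$ locally $H^1$ near $\bar B$. Even then you still have to localize elements of $\cH$ away from $\bar I$.

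The paper never proves the form-core property. The missing idea is to test $\cE_{H(I)}g$ against \emph{arbitrary} $f\in D(q_I)$, not only on the diagonal. Write $k=c\,w\,g'$ with $w$ the weight of $q_I$. Then for every $f\in D(q_I)$,
\[
\Re(f,\cE_{H(I)}g)=\beta(\chi_I k,f)=\tfrac12\int_I k f'\,dx=q_I(f,g).
\]
The integral makes sense because $f'|_I\in L^2(I,w\,dx)$ and $k/w$ is bounded. The identity is justified with the same approximants used to show $\chi_I k\in H(I)$.

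By the representation theorem for the closed form $q_I$, this says exactly that $C^\infty(\bS^1)\subset D(A_I)$ and $A_I g=\cE_{H(I)}g$ there. In other words, the self-adjoint operator $A_I$ of $q_I$ extends $\cE_{H(I)}|_{C^\infty(\bS^1)}$. That restriction is essentially self-adjoint by Corollary~\ref{cor:coreE}, so $A_I=\cE_{H(I)}$ and $q_I=S(\cdot|H(I))$. That $C^\infty(\bS^1)$ is a form core for $q_I$ then follows a posteriori rather than being needed as input.

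Incidentally, if you carry out your M\"obius reduction from $\bR_+$ to $B$, you will find the multiple is $\tfrac12$. The paper's $\delta_B$ has generator $\tfrac{1-x^2}{2}\partial_x$, so $\imath\log\Delta_{H(B)}f=\pi(1-x^2)f'$, and the constant in $q_B$ should be $\pi/2$ rather than $\pi$. This is a normalization slip in \eqref{eq:logDelta}; it affects neither argument.
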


\proof
Again we only prove the statement for $q_B$. We start by observing that clearly $C^\infty(\bS^1) \subset D(q_B)$. Moreover, given  $g \in C^\infty(\bS^1)$, let $h_a(x) := (a^2-x^2) g'(x)$, $a, x \in \bR$. Then, using once more~\eqref{eq:normL2bound} one checks that $\chi_{(-1+\varepsilon,1-\varepsilon)}h_{1-\varepsilon} \to \chi_Bh_1$ in $\cH$ as $\varepsilon \to 0$, and then mollifying $\chi_{(-1+\varepsilon,1-\varepsilon)}h_{1-\varepsilon}$ as in the proof of Prop.~\ref{prop:density} to obtain a sequence of $C^\infty_c(\bR)$ function, one concludes that $\chi_B h_1  \in H(B)$. Similarly, $(1-\chi_B)h_1 \in H(B^c)= H(B)'$, and therefore, from~\eqref{eq:logDelta},
\[
\cE_{H(B)} g = \imath P_{H(B)} \imath \log \Delta_{H(B)} g= \imath \chi_B h_1,
\]
and, for all $f \in D(q_B)$,
\[
q_B(f,g) = \pi \int_{-1}^1 dx\, (1-x^2) f'(x) g'(x) = \Re(f,\cE_{H(B)}g) = \Re(f,A_B g).
\]
Thus $A_B$ is a self-adjoint extension of $\cE_{H(B)}|_{C^\infty(\bS^1)}$, which is essentially self-adjoint by Cor.~\ref{cor:coreE}, so that, finally, $A_B = \cE_{H(B)}$ and $q_B = S(\cdot |H(B))$.
\eproof

We can thus write, for all $f \in \cH$,
\[
S(f|H(B)) = \pi\int_{-1}^1 (1-x^2) |f'(x)|^2 dx, \qquad S(f|H(\bR_+))=\pi\int_0^{+\infty} x |f'(x)|^2 dx,
\]
with the convention that the right hand sides are $+\infty$ if (the appropriate restriction of) $f'$ does not belong to $L^2(B,(1-x^2)\,dx)$ and $L^2(\bR_+,x\,dx)$ respectively.
\medskip

\noindent
{\bf Acknowledgments.} We thank Vincenzo Morinelli for several useful discussions on the subject of Appendix B. R.L. and G.M. acknowledge the Excellence Project 2023-2027 Mat-Mod@TOV 
awarded to the Department of Mathematics, University of Rome Tor Vergata. The work of G.M. is partly supported by INdAM-GNAMPA and by the University of Rome Tor Vergata funding OANGQS, CUP E83C25000580005.

\end{document}